\def\BibTeX{{\rm B\kern-.05em{\sc i\kern-.025em b}\kern-.08em
    T\kern-.1667em\lower.7ex\hbox{E}\kern-.125emX}}
\newtheorem{definition}{Definition}
\newtheorem{theorem}{Theorem}
\newtheorem{corollary}{Corollary}
\tikzstyle{pinstyle} = [pin edge={to-,thin,black}]
\newcommand{\nat}{\mathbb{N}}
\newcommand{\reals}{\mathbb{R}}
\newcommand{\dataset}{\boldsymbol{\beta}}
\newcommand{\gaussN}{\mathcal{N}}
\newcommand{\Prob}{\mathbb{P}}
\newcommand{\barbeta}{\bar{\boldsymbol{\beta}}}
\newcommand{\optstat}{\Phi^*(\bar{\boldsymbol{\beta}})}
\newcommand{\optstati}{\hat{\Phi}^i(\barbeta)}
\newcommand{\stati}{\Phi^i}
\newcommand{\fcdf}{F_{\rvtest}}
\newcommand{\fccdf}{\bar{F}_{\rvtest}}
\newcommand{\Popt}{\mathbb{P}_{\optstat}}
\newcommand{\rvtest}{\Psi}
\newcommand{\rvtesti}{\Psi^i}
\newcommand{\numagents}{M}
\newcommand{\conset}{X_c}
\newcommand{\effset}{X_{PF}}
\newcommand{\simplex}{\mathcal{W}_{\numagents}}
\newcommand{\psimplex}{\simplex^+}
\newcommand{\woptset}{S(\sw)}
\newcommand{\lconst}{\alpha_{\dtime}}
\newcommand{\dtime}{t}
\newcommand{\respi}{\var^i_{\dtime}}
\newcommand{\nrespi}{\tilde{\var}^i_{\dtime}}
\newcommand{\noise}{\epsilon}
\newcommand{\noisei}{\noise^i}
\newcommand{\ndistr}{\Lambda}
\newcommand{\sw}{\mu}
\newcommand{\var}{\beta}
\newcommand{\varti}{\var_t^i}
\newcommand{\dimn}{N}
\newcommand{\numadv}{m}
\newcommand{\statei}{x}
\newcommand{\measi}{y^i}
\newcommand{\snoisei}{w}
\newcommand{\mnoisei}{v^i}
\newcommand{\stime}{t} 
\newcommand{\ftime}{k} 
\newcommand{\sdim}{q} 
\newcommand{\mdim}{p} 
\newcommand{\sncov}{Q_{\stime}(\varti)}
\newcommand{\mncovi}{R_{\stime}(\varti)}
\newcommand{\argo}{\beta}
\newcommand{\kstate}{\hat{x}} 
\newcommand{\kcov}{\Sigma} 
\newcommand{\ARE}{\mathcal{A}(\lconst,\varti,\kcov)}
\title{Multi-Agent Inverse Learning for Sensor Networks: Identifying Coordination in UAV Networks \thanks{This research was supported by NSF grants CCF-2312198 and CCF-2112457 and U. S. Army Research Office under grant W911NF-24-1-0083}}
\author{Luke Snow,  Vikram Krishnamurthy \thanks{Department of Electrical \& Computer Engineering, Cornell University, Ithaca, NY 14853, USA.  emails: las474@cornell.edu and vikramk@cornell.edu}}
\begin{document}



\maketitle

\begin{abstract}
Suppose there is an adversarial UAV network being tracked by a radar. How can the radar determine whether the UAVs are \textit{coordinating} in some well-defined sense? How can the radar infer the objectives of the individual UAVs and the network as a whole? 
We present an abstract interpretation of such a strategic interaction, allowing us to conceptualize coordination as a linearly constrained multi-objective optimization problem. Then, we present some tools from microeconomic theory that allow us to detect coordination and reconstruct individual UAV objective functions, from radar tracking signals. This corresponds to performing \textit{inverse multi-objective optimization}. We present details for how the abstract microeconomic interpretation corresponds to, and naturally arises from, physical-layer radar waveform modulation and multi-target filtering. This article serves as a tutorial, bringing together concepts from several established research contributions in an expository style. 
\end{abstract}

\section{Introduction}
In strategic environments, autonomous systems such as UAVs are becoming ubiquitous for reconnaissance, surveillance, and combative purposes. Often such autonomous systems are deployed in groups, e.g., UAV swarms, in order to collect information more efficiently or to multiply the combative force. Furthermore, these multi-agent intelligent systems typically have sophisticated sensors and communication capabilities which allow them to respond in real-time to an adversary's probe, e.g., radar tracking signals. This results in a strategic interaction between the multi-agent system and the adversary; the study of this interaction at the physical layer, for instance analyzing electromagnetic suppression techniques, is typically referred to as 'electronic warfare'. 

We consider a multi-agent strategic interaction scenario, in which a radar is tracking a network of UAVs. We take the perspective of the radar, and ask how can we detect \textit{coordination} in the UAV network? Such coordination detection would not only allow us to understand the functionality of the network, but when combined with estimates for the UAV objectives would allow us to \textit{predict future network behavior}. Thus, the second question we ask is: if the network is coordinating, how can we reconstruct individual objective functions which induce the observed aggregate behavior? 

We study this problem at a higher level of abstraction than traditional electronic warfare investigations; this allows us to formulate the 'coordination' problem as a general linearly-constrained multi-objective optimization. Then, the problem of detecting coordination and reconstructing feasible objective functions becomes that of \textit{inverse multi-objective optimization}. We present several tools from microeconomic theory which allow us to accomplish this inverse learning problem efficiently. While this microeconomic interpretation is conceptualized at a higher level of abstraction than traditional electronic warfare procedures, we also present how this framework arises naturally from physical-layer considerations such as radar waveform modulation and multi-target filtering algorithms. 

This chapter is organized as follows. Section~\ref{sec:MOO} presents the mathematical details of (forward and inverse) multi-objective optimization, and presents the microeconomic tools which can be used to accomplish general inverse multi-objective optimization. Then, Section~\ref{sec:UAV} presents the UAV network coordination detection procedure. First, the radar - UAV network interaction dynamics are specified, then it is shown how the microeconomic interpretation arises from filtering-level tracking considerations. Finally, in Section~\ref{sec:cdet} we present the application of the microeconomic tools from Section~\ref{sec:MOO} to the coordination detection problem.

\section{Multi-Objective Optimization and Revealed Preferences}
\label{sec:MOO}

In order to characterize conditions under which coordinaton can be detected by an outside observer, one much precisely define what is meant by coordination in the first place. Notions of coordination have appeared in e.g., \citep{chen2020toward}, \citep{quintero2010optimal},\citep{wise2006uav}. We utilize a well-motivated and widely used framework to define coordination, known as multi-objective optimization. In this section we present the mathematical details of multi-objective optimization and \textit{inverse} multi-objective optimization, and give a microeconomic result allowing us to acheive the latter efficiently. The application of these frameworks to the UAV coordination detection problem will be detailed in the following sections. 

Here we outline what distinguishes multi-objective optimization from single-objective optimization, and provide the resultant generalized notion of a solution concept.
\subparagraph{Multi-Objective Problem} 
We consider a system composed of multiple autonomous agents. Each agent has an individual utility function which captures their objective, and aims to act in a way that maximizes their utility function. In order to capture a notion of coordination it is assumed that there is a joint constraint on the actions taken, such that both the set of all actions which can be taken by a particular agent and the resultant utility achieved by this agent, are dependent on the actions taken by \textit{all} of the agents. This coupling forces the set of all agents to jointly consider the actions taken in order to achieve individual objectives. 

\subparagraph{Multi-Objective Solution Concept} The reader may realize that this is also the setting of game theory, where a standard investigation is that of non-cooperative agents acting solely in self-interest. The classical solution concept in non-cooperative game theory is that of Nash Equilibrium, where no agent can gain in their utility by unilaterally deviating (changing their action). We distinguish this from the \textit{cooperative} solution concept in multi-objective optimization, that of \textit{Pareto-optimality}. Pareto optimality occurs when no agent can gain in their utility by unilaterally deviating (changing their action) \textit{without simultaneously decreasing the utility of another agent}. So, an individual agent could feasibly change their action to increase their utility, but this would come at the expense of decreasing another agent's utility. Thus, a Pareto-optimal solution captures a notion of coordination, since the agents do not act in complete self-interest but act in order to maximize the entire set of utility functions. 


\subparagraph{Inverse Multi-Objective Problem}
Now that the multi-objective problem has been conceptualized, one may ask: given a dataset of actions, how can it be determined if the group is behaving in a Pareto-optimal manner? This general problem is denoted as inverse multi-objective optimization, and originated from the recovery of decision process structures in microeconomic group behavior analysis \citep{chiappori2009microeconomics}. 
More specifically, in inverse multi-objective optimization we aim to determine \textit{if there exist} individual utility functions for which the actions are multi-objective optimal. If so, we aim to reconstruct such utility functions in order to better understand or predict the system dynamics. A key framework for accomplishing this will be that of microeconomic revealed preferences. 

\subparagraph{Revealed Preferences} The micro-economics literature contains the most well-developed formulations of such inverse multi-objective optimization, nominally 'Group Revealed Preferences'. The Revealed Preferences paradigm dates back to seminal work \citep{afriat1967construction}, where utility maximization behavior is detected from consumer budget-expenditure data. The Group Revealed Preferences \citep{cherchye2011revealed} formulation extends these works to the multi-agent scenario, giving necessary and sufficient conditions for group behavior to be consistent with multi-objective optimization. Furthermore, a methodology is provided for reconstructing feasible utility functions under which the observed behavior is multi-objective optimal. This allows for inference of multi-agent group motives or prediction of future behavior. 

\paragraph{Outline} 
In the rest of this section we make the above concepts more mathematically precise: we first outline the mathematics of multi-objective optimization, then provide the relevant framework for inverse multi-objective optimization, given by the micro-economic Group Revealed Preference formulation. In the following section we utilize these mathematical tools in the UAV coordination detection problem.

\subsection{Multi-Objective Optimization}
In this section we introduce the multi-objective optimization problem we will consider, then present its solution concept of Pareto-optimality, and discuss how Pareto-optimal solutions can be obtained. 

\subparagraph{Multi-Objective Problem}

We consider $M \in \nat$ agents. We denote $\var \in \reals^n$ a general joint-action taken by all agents. E.g., $\var$ can represent a vector containing distinct actions taken by each agent, or it can represent a single action that has been agreed upon by the set of agents. Each agent $i\in[M] := \{1,\dots,M\}$ has a \textit{utility function} $f^i: \reals^n \to \reals$, representing agent $i$'s utility gained from the joint-action taken. 

This setting is sufficiently general to capture standard game-theoretic notions. For instance, in non-cooperative Game Theory, the joint-action $\var$ can represent the set of distinct actions taken by each agent. Then solution concepts such as Nash Equilibria, where no agent has an incentive to unilaterally deviate from its action, can be studied. 

Our focus in this setting will instead be on a notion of multi-agent \textit{coordination}, given by a particular linearly-constrained multi-objective optimization:

\vspace{0.5cm}

\fbox{%
\parbox{0.95\linewidth}{%
\textbf{Linearly Constrained Multi-Objective Optimization}
\begin{align}
\begin{split}
\label{eq:MOP}
 &\arg \max _\var \{f^1(\var),\dots,f^{\numagents}(\var)\} \\
 &\text{s.t. } \var \in \conset := \{\gamma \in \mathbb{R}^n  : \alpha'\gamma \leq 1 \}
\end{split}
\end{align}
}
}

\vspace{0.5cm}

\eqref{eq:MOP} encodes the idea that the agents must cooperate such that joint-action $\beta$ maximizes over all objective functions $f^i$, provided $\beta$ is in a linear constraint set $\alpha'\beta \leq 1$ formed by \textit{constraint vector} $\alpha$. The linear constraint $\alpha' \var$ \footnote{For vector $x$ we let $x'$ represent the transpose of $x$.}is bounded by 1 without losing generality (see Sec. I-A of \citep{krishnamurthy2020identifying}). 

The astute reader may at this point ask what precisely is meant by the maximization in \eqref{eq:MOP}. Indeed, it turns out that we need to introduce a generalized notion of optimality in order for this maximization to be well-posed. 

\subparagraph{Multi-Objective Solution Concept. Pareto Optimality} In single-objective optimization, the goal is to find a feasible argument which maximizes the objective, in that the objective evaluated at this argument is greater than or equal to the objective evaluated at any other point in the feasible set. A naive generalization of this to the multi-objective setting might be to find an argument which maximizes all objectives. However, unless there are very tight restrictions on the objective function structures (e.g., all the same function, or all one-dimensional and monotone) there will seldom exist an argument $\var$ which simultaneously maximizes all objectives. Thus, there will be tradeoffs between objectives for varying argument $\var$. The general solution concept for the multi-objective optimization problem \eqref{eq:MOP} that captures these tradeoffs is instead that of \textit{Pareto optimality}:

\begin{definition}\textbf{Pareto Optimality}:\\
\label{def:par_opt}
 \normalsize For fixed $\{\{f^i(\cdot)\}_{i=1}^{\numagents},\alpha \}$ and a vector $\var \in \conset = \{\gamma \in\reals^n: \,\alpha'\gamma\leq 1\}$, let 
\begin{align*}
\begin{split}
    &Z^t(\var) = \{\gamma \in \conset : f^i(\gamma) \geq f^i(\var) \ \forall i \in [\numagents]\} \\
    &Y^t(\var) = \{\gamma \in \conset : \exists k\in[M] : f^k(\gamma) > f^k(\var) \}
\end{split}
\end{align*}
The vector $\var$ is said to be \textit{Pareto-optimal} if 
\begin{equation}
\label{efficiency}
Z^t(\var) \cap Y^t(\var) = \emptyset
\end{equation}
where $\emptyset$ denotes the empty set.
\end{definition}

In words, a vector $\var$ is Pareto-optimal if there does not exist another vector $\gamma$ in the feasible set $\conset$ which increases the value of some objective $f^i(\cdot)$ without simultaneously decreasing the value of some other objective $f^j(\cdot)$, $i,j\in [\numagents]$. 

This is a well-motivated and nontrivial conception of cooperative optimality in multi-agent systems \citep{marden2014achieving}, \citep{ruadulescu2020multi}. It captures the idea that even if a single agent may gain by deviating from the Pareto-optimal joint-action, it does not do so since that gain would come at the expense of another agent. From another perspective, if a joint-action is not yet Pareto-optimal it means that it can be altered such that no agents' utility decreases and at least one agent's utility increases. Such an alteration may have to be undertaken by a certain agent who gains nothing by changing their action, but does so in order to increase the utility of a different agent. Thus, achieving the Pareto-optimum conceptually corresponds to all agents simultaneously acting for the best of the entire group.

In general there will be a set of Pareto-optimal solutions, some benefiting certain individual agents more than others, but all maximizing the utilities of the entire group in the above described sense.

\begin{definition}\textbf{Pareto Frontier}: \\
The set of \textit{all} Pareto-optimal solutions to the problem \eqref{eq:MOP} is known as the \textit{Pareto-frontier}, and is denoted 
\begin{equation}
\label{eq:effset}
\effset(\{f^i\}_{i=1}^M, \lconst) := \{\var \in \conset : \eqref{efficiency} \textrm{ is satisfied}\}
\end{equation}
\end{definition}

Now, we say that $\var$ solves \eqref{eq:MOP} if and only if $\var$ is Pareto-optimal, i.e.
\begin{align*}
   &\var \in \{ \arg \max_\var \{f^1(\var),\dots,f^{\numagents}(\var)\}\ s.t. \ \var \in \conset \}\ \Longleftrightarrow \ \var \in \effset(\{f^i\}_{i=1}^M, \alpha)
\end{align*}

\subparagraph{Computing Pareto Optimal Solutions}

We have discussed the multi-objective optimization problem and its solution concept of Pareto-optimality. The question remains: given joint-action constraints and individual utility functions, how can one (or the multi-agent group itself) actually compute Pareto-optimal solutions? Here we show how Pareto-optimal solutions can be obtained by simply maximizing linear combinations of objective functions $\{f^i\}_{i=1}^M$ subject to the linear constraint $\alpha'\var \leq 1$.

Before presenting this result we need to introduce some notation. Let $\sw = (\sw^1,\dots,\sw^{\numagents})' \in \reals^{\numagents}_{\geq0}$ be a set of real-valued weights on the non-negative unit simplex $\simplex$, defined as 
\begin{equation}
    \simplex := \{\sw \in \reals^{\numagents}_{\geq0} : \boldsymbol{1}'\sw = 1\}.
\end{equation} Also let  
\begin{equation}
\psimplex := \{\sw \in \reals^{\numagents}_{+} : \boldsymbol{1}'\sw = 1\} \subset \simplex
\end{equation} be the set of strictly positive weights. Let us denote 
\[\woptset := \biggl\{\beta : \,\beta \in \arg \max_{\gamma}\sum_{i=1}^M \sw^i f^i(\gamma) \, s.t.\, \alpha'\gamma \leq 1 \biggr\}\]
i.e., $\woptset$ is the set of all vectors $\beta$ maximizing a linear combination of objective functions $\{f^i\}$ with weights $\mu$, such that the linear constraint $\alpha' \beta \leq 1$ is satisfied. Then, we have following relation \citep{miettinen2012nonlinear}: 

\begin{equation}
\label{eq:effrel}
\bigcup_{\sw \in \psimplex}\woptset \subseteq \effset(\{f^i\}_{i=1}^M, \lconst) \subseteq \bigcup_{\sw \in \simplex} \woptset
\end{equation}
where the second inclusion is an equality if the objective functions are concave. Relation \eqref{eq:effrel} implies that if we solve 
\begin{equation}
\label{eq:linmax}
    \arg\max_{\gamma}\sum_{i=1}^M\mu^i f^i(\gamma) \, s.t.\, \alpha'\gamma \leq 1 
\end{equation} with weights $\mu$ strictly positive, then this solution is guaranteed to be Pareto-optimal. Furthermore, provided the objective functions $f^i$ are concave, \textit{all} Pareto-optimal solutions can be produced by solving \eqref{eq:linmax} with weights varying over the non-negative simplex $\simplex$. In particular, this is useful since \eqref{eq:linmax} is a \textit{constrained single-objective optimization}, which can be computed efficiently in most cases if the utility functions are concave.  

\subsection{Inverse Multi-Objective Optimization}
\label{sec:imoo}
In this section we make the concept of inverse multi-objective optimization mathematically precise, and introduce a key theorem enabling us to achieve it in a general microeconomic framework. 
\subparagraph{Inverse Multi-Objective Problem}
The inverse multi-objective optimization problem can be stated conceptually as follows. Given constrained outputs (actions) of an observed multi-agent system, does there exist a set of utility functions under which the observed outputs are multi-objective optimal? Can these utility functions be reconstructed? At first, a mathematical instantiation of this statement might be: Given $(\alpha,\beta)$, does there exist a set of utility functions $\{f^i\}_{i=1}^M$ and weights $\mu \in \simplex$ such that 
\begin{equation}
\label{eq:bargmax}
\beta \in \arg\max_{\gamma}\sum_{i=1}^M \sw^i f^i(\gamma) \, s.t. \, \alpha'\gamma \leq 1
\end{equation}
If there exist such a set of weights $\mu$ and utility functions $\{f^i\}_{i=1}^M$ then we say that the data $(\alpha,\beta)$ is \textit{rationalized} by these weights and utility functions. However, for a single data-point $(\alpha,\var)$, there will \textit{always} exist sets $\{f^i\}$ and $\mu$ which rationalize it. To see this, take $\mu$ in the corner of the simplex, such that $\mu^i=1$ for some $i$ and $\mu^j=0 \ \forall j\neq 1$. Then \eqref{eq:bargmax} reduces to $\var \in \arg\max_{\gamma} f^i(\gamma) \, s.t. \, \alpha' \gamma \leq 1$, and obviously one can find some $f^i$ for which this is true. Thus, the inverse multi-objective optimization with a single data-point is trivial.

To make the problem non-trivial, we consider multiple data-points indexed by time, i.e., suppose we observe the dataset $\dataset := \{\lconst, \beta_t\}$
of constraint vectors $\lconst$ and system outputs $\beta_t$ indexed over discrete-time $t\in[T] := \{1,\dots,T\}$. Then this extended inverse multi-objective optimization problem can be stated as follows:

\vspace{0.5cm}

\fbox{%
\parbox{0.95\linewidth}{%
\normalsize
\textbf{Inverse Multi-Objective Optimization} \\
Given a time-indexed dataset $\dataset := \{\lconst, \beta_t\}$, do there exist utility functions $\{f^i\}_{i=1}^M$ such that 
\[\beta_t \in \arg\max_{\gamma} \sum_{i=1}^M \sw^i f^i(\gamma) \, \, s.t. \, \, \lconst' \gamma \leq 1 \quad \forall t \in [T]\]
for some weights $\sw$ in the simplex $\simplex$? If so, how can one reconstruct these utility functions? 
}}

\vspace{0.5cm}

The above problem is distinct from the (trivial) single data-point problem explained above, since here the utility functions $\{f^i\}_{i=1}^M$ must rationalize the data $\{\lconst,\beta_t\}$ \textit{for all} $t\in[T]$ simultaneously. One can easily see how this distinction makes the problem non-trivial, since the set of utility functions which rationalize the data-set for some fixed time-point may not rationalize the data for another time-point. In this sense, the inverse multi-objective optimization problem tests whether a multi-agent system behaves optimally (in the Pareto-sense) at each time point, and is also consistent in behaving optimally (w.r.t. the same utility functions) over all tested time-points. 

Figure~\ref{fig:blockdiagram} provides an illustration of the procedure for inverse multi-objective optimization, in relation to the generative process of multi-objective optimization.

Next we discuss a micro-economic solution to a specific form of this problem.

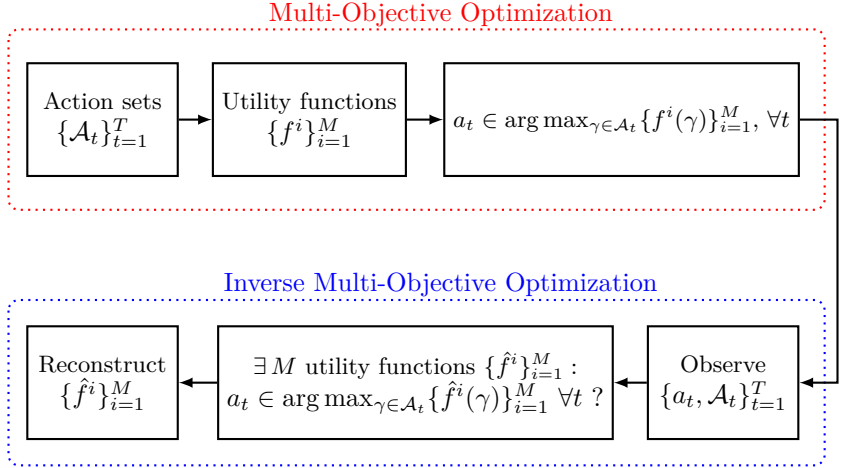
\begin{figure}
\centering

\begin{tikzpicture}[>=latex, thick]

\draw[dotted, thick, rounded corners, red] (-1.25,1.2) rectangle (9.6,-1.2);

\node[red] at (4.5,1.4) {Multi-Objective Optimization};

\draw[dotted, thick, rounded corners, blue] (-1.25,-2.4) rectangle (9.6,-4.6);

\node[blue] at (4.5,-2.2) {Inverse Multi-Objective Optimization};

\node[draw, rectangle, minimum width=2cm, minimum height=1.5cm, align=center] (action) at (0,0) {\small Action sets\\ $\{\mathcal{A}_t\}_{t=1}^T$};
\node[draw, rectangle, minimum width=2cm, minimum height=1.5cm, align=center] (utility) at (2.75,0) {\small Utility functions\\ $\{f^i\}_{i=1}^M$};
\node[draw, rectangle, minimum width=4cm, minimum height=1.5cm, align=center] (optimal) at (6.9,0) {\small $a_t \in \arg\max_{\gamma\in\mathcal{A}_t} \{f^i(\gamma)\}_{i=1}^M,\, \forall t$};

\draw[->] (action) -- (utility);
\draw[->] (utility) -- (optimal);

\node[draw, rectangle, minimum width=2cm, minimum height=1.5cm, align=center] (result1) at (0,-3.5) {\small Reconstruct \\ $\{\hat{f}^i\}_{i=1}^M$};
\node[draw, rectangle, minimum width=4.5cm, minimum height=1.5cm, align=center] (result2) at (4.16,-3.5) {\small $\exists \, M $ utility functions $\{\hat{f}^i\}_{i=1}^M : $ \\$a_t \in \arg\max_{\gamma\in\mathcal{A}_t} \{\hat{f}^i(\gamma)\}_{i=1}^M \,\, \forall t$ ?};
\node[draw, rectangle, minimum width=2cm, minimum height=1.5cm, align=center] (result3) at (8.25,-3.5) {\small Observe \\$\{a_t,  \mathcal{A}_t\}_{t=1}^T$};

\draw[->] (result2) -- (result1);

\draw[->] (result3) -- (result2);

\draw[->] (optimal.east) -- +(0.5,0) |-  (result3.east);

\end{tikzpicture}
\caption{Forward and inverse multi-objective optimization. The (forward) multi-objective optimization problem consists of a set of feasible actions and a utility function for each agent. The optimization problem is to find an action $a_t$ that maximizes over the set of utility functions. The \textit{inverse} multi-objective optimization problem is to observe the actions taken, and first determine if there exist individual utility functions making the actions Pareto-optimal. Then, if so, these 'rationalizing' utility functions should be reconstructed.}
\label{fig:blockdiagram}
\end{figure}

\subparagraph{Group Revealed Preferences}
The microeconomic field of Revealed Preferences aims to detect utility maximization behavior among observed consumers. We present here the form of multi-objective optimization considered in this literature, which is a special case of the general multi-objective problem \eqref{eq:MOP}. 
Suppose we have the dataset of constraints and system responses $\dataset = \{\lconst, \{\respi\}_{i=1}^{\numagents}, t \in [T] \}$. Here $\beta_t^i$ corresponds to the action taken by agent $i$. We say the dataset satisfies "collective rationality" if it solves the following multi-objective optimization problem:

\vspace{0.5cm}

\fbox{%
\parbox{0.95\linewidth}{%
\normalsize \textbf{Microeconomic Collective Rationality}\\
$\exists \, \mu \in \simplex, \{U^i\}_{i=1}^M$, $U^i: \reals^{\dimn} \to \reals$ concave and monotone increasing such that: 
\begin{equation}
\label{eq:MCR}
    \{\respi\}_{i=1}^M \in \arg\max_{\{\gamma^i\}_{i=1}^M} \sum_{i=1}^{\numagents} \mu^i U^i(\gamma^i) \quad \textrm{ s.t. } \lconst'\left( \sum_{i=1}^{\numagents} \respi\right)\leq 1 \quad \forall\,t
\end{equation}
}}

\vspace{0.5cm}
Notice that This form of "collective rationality" can be obtained as a special case of the more general form \eqref{eq:MOP}, where each agent's utility function is only explicitly dependent on it's own action. However, \eqref{eq:MCR} still optimizes over joint-actions in the same sense as \eqref{eq:MOP} since the linear constraint limits the sum of individual actions. 

The inverse multi-objective problem in this specialized case then is analogous to the general problem in the previous subsection: we ask if there exist utility functions such that \eqref{eq:MCR} holds for all $t$. In \citep{cherchye2011revealed}, a necessary and sufficient condition is derived for the dataset $\dataset$ to be consistent with this notion of multi-objective optimization.

\begin{theorem}
 \label{thm:cherchye1}
    Let $\dataset = \{\lconst, \{\respi\}_{i=1}^{\numagents}, t\in[T]\}$ be a set of observations. The following are equivalent:
    \begin{enumerate}
    \item there exist a set of $M$ concave and continuous objective functions $U^1,\dots,U^m$, weights $\sw \in \psimplex$ and constraint $p^*$ such that $\forall t \in [T]$:
    \begin{align}
    \begin{split}
    \label{thm1:rat}
        \{\respi\}_{i=1}^{\numagents} \in &\arg\max_{\{\argo^i\}_{i=1}^{\numagents}} \sum_{i=1}^{\numagents} \sw^i U^i(\argo^i) \ \  s.t. \ \lconst' (\sum_{i=1}^{\numagents}\argo^i ) \leq p^*
    \end{split}
    \end{align}
    \item there exist numbers $u_j^i \in \reals, \lambda_j^i > 0$ such that for all $s,t \in [T]$, $i \in [M]$: 
    \begin{equation}
    \label{af_ineq}
        u_s^i - u_t^i - \lambda_t^i\lconst'[\var_s^i - \varti] \leq 0
    \end{equation}
    \end{enumerate}
\end{theorem}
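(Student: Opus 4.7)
My plan is to follow the classical Afriat-style pattern adapted to the group setting. The forward direction (1)$\Rightarrow$(2) will exploit first-order optimality together with concavity of the $U^i$, while the reverse direction (2)$\Rightarrow$(1) will construct the utility functions explicitly as pointwise minima of affine functions built from the Afriat numbers $u_t^i, \lambda_t^i$.

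\textbf{(1)$\Rightarrow$(2).} Starting from (1), I would write the Lagrangian of the joint program at each time $t$, attaching a single multiplier $\lambda_t \geq 0$ to the coupling constraint $\alpha_t'(\sum_i \gamma^i) \leq p^*$. Stationarity will give $\mu^i g_t^i = \lambda_t \alpha_t$ for every $i$, where $g_t^i \in \partial U^i(\beta_t^i)$. I would then set $u_t^i := U^i(\beta_t^i)$ and $\lambda_t^i := \lambda_t/\mu^i > 0$, so $g_t^i = \lambda_t^i \alpha_t$. Concavity of $U^i$ gives the supporting-hyperplane bound $U^i(\beta_s^i) - U^i(\beta_t^i) \leq (g_t^i)'(\beta_s^i - \beta_t^i)$, which upon substitution is exactly \eqref{af_ineq}.

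\textbf{(2)$\Rightarrow$(1).} Starting from (2), for each agent $i$ I would define the Afriat utility
\[
U^i(\gamma) := \min_{t \in [T]} \bigl\{ u_t^i + \lambda_t^i \, \alpha_t'(\gamma - \beta_t^i)\bigr\},
\]
which, as a pointwise minimum of finitely many affine functions, is automatically concave, continuous, and piecewise linear, and is monotone nondecreasing under the standard sign assumption on $\alpha_t$ appropriate for a resource-type constraint. The crucial use of \eqref{af_ineq} is to verify $U^i(\beta_t^i) = u_t^i$: swapping $s$ and $t$ in \eqref{af_ineq} gives $u_s^i + \lambda_s^i \alpha_s'(\beta_t^i - \beta_s^i) \geq u_t^i$ for every $s$, so the minimum defining $U^i(\beta_t^i)$ is attained at $s = t$ with value $u_t^i$, and in particular $\lambda_t^i \alpha_t \in \partial U^i(\beta_t^i)$. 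It will then remain to pick weights $\mu \in \psimplex$ and a constant $p^*$ so that the joint KKT conditions at $\{\beta_t^i\}$ hold with a common multiplier $\lambda_t$ at each $t$; concavity of the $U^i$ upgrades these sufficient conditions to global optimality, giving (1).

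\textbf{Main obstacle.} The hard part will be the weight-matching in the reverse direction: the joint KKT condition forces $\mu^i \lambda_t^i$ to be the same across agents $i$ for every $t$, whereas \eqref{af_ineq} alone imposes no such cross-agent consistency on the $\lambda_t^i$. I would address this by exploiting two pieces of slack that are invisible in the bare inequalities: the freedom to rescale $(u_t^i, \lambda_t^i) \mapsto (c^i u_t^i, c^i \lambda_t^i)$ without disturbing \eqref{af_ineq}, and the fact that at each kink $\partial U^i(\beta_t^i)$ contains the entire convex hull of $\{\lambda_s^i \alpha_s\}$ over all $s$ achieving the minimum, offering richer subgradients to select from. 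The detailed bookkeeping — fixing the rescaling constants, choosing subgradients, and identifying $p^*$ (possibly by normalizing $\alpha_t' \sum_i \beta_t^i$) — would follow the sharing-rule Lagrangian argument in \citep{cherchye2011revealed}, which is where essentially all of the technical work lives.
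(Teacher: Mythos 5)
Your overall architecture is the standard Afriat-style one, and for reference the paper itself offers no argument here at all (it simply cites Proposition~1 of \citep{cherchye2011revealed}), so the comparison is really between your sketch and the classical proof you are both pointing at. Your forward direction is essentially right, with one caveat you should make explicit: you divide by $\sw^i$ and need $\lambda_t>0$, i.e.\ the coupling constraint must bind with a strictly positive multiplier at every $t$. That does not follow from concavity and continuity alone (a satiated $U^i$ rationalizes data with $\lambda_t=0$, while \eqref{af_ineq} demands $\lambda_t^i>0$); it is exactly why the surrounding text insists on local non-satiation, and your proof should invoke it.

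The genuine gap is in (2)$\Rightarrow$(1), precisely at the point you flag, and neither of your two proposed escape routes closes it. Joint optimality of $\{\varti\}_{i=1}^{\numagents}$ for a \emph{fixed} $\sw\in\psimplex$ forces the weighted marginal utilities of income $\sw^i\lambda_t^i$ to agree across $i$ at every $t$ (otherwise transferring budget to the agent with the largest $\sw^i\lambda_t^i$ strictly increases the weighted sum). That is $T(M-1)$ equality constraints, whereas the per-agent rescaling $(u_t^i,\lambda_t^i)\mapsto(c^iu_t^i,c^i\lambda_t^i)$ supplies only $M$ degrees of freedom; and the superdifferential of the Afriat utility at $\varti$ is generically the singleton $\{\lambda_t^i\lconst\}$ (the minimum in the construction is attained only at $s=t$ unless some inequality \eqref{af_ineq} holds with equality), so there is no extra subgradient slack to exploit. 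The clean way to finish is to note that the upper-bound chain
\begin{equation*}
\sum_{i}\sw_t^iU^i(\argo^i)\;\leq\;\sum_i\sw_t^iu_t^i+\sum_i\sw_t^i\lambda_t^i\,\lconst'(\argo^i-\varti)\;=\;\sum_i\sw_t^iu_t^i+c_t\,\lconst'\Bigl(\sum_i\argo^i-\sum_i\varti\Bigr)\;\leq\;\sum_i\sw_t^iU^i(\varti)
\end{equation*}
closes immediately once one takes $\sw_t^i\propto 1/\lambda_t^i$, i.e.\ observation-dependent Pareto weights --- which is the formulation actually used in the collective-consumption literature you cite. If you insist on a single time-invariant $\sw$, as the theorem is literally stated here, you must additionally prove that the Afriat system \eqref{af_ineq} always admits a solution with $\lambda_t^i=\lambda_t$ independent of $i$; you have not shown this, and your sketch as written does not yield statement (1).
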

\begin{proof}
See Proposition 1 of \citep{cherchye2011revealed}
\end{proof}

Furthermore, if the above conditions hold, then specific utility functions which "rationalize" the dataset can be reconstructed in the following way. 

\begin{corollary}
\label{cor:Utrec}
Given constants $u_t^i, \lambda_t^i, t\in[T],i\in[M]$ which make \eqref{af_ineq} feasible, explicit monotone and continuous objective functions that "rationalize" the dataset \\ $\{\lconst, \respi, t \in [T], i \in [M]\}$ are given by
\begin{equation}
\label{eq:Utrec}
     U^i(\cdot) = \min_{t \in [T]} \left[u_t^i + \lambda_t^i\lconst'[\cdot - \respi] \right]
\end{equation}
i.e., \eqref{thm1:rat} is satisfied with objective functions \eqref{eq:Utrec}.
\end{corollary}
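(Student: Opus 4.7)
The plan is to verify, one property at a time, that the piecewise-affine functions $U^i(\cdot) = \min_{t \in [T]}[u_t^i + \lambda_t^i \lconst'(\cdot - \respi)]$ inherit the structural features claimed by the corollary, and then to reproduce each observed action $\respi$ as a first-order optimum of the collective program \eqref{thm1:rat} through KKT conditions that are unlocked by the Afriat inequalities \eqref{af_ineq}. Concavity and continuity of $U^i$ are immediate, since $U^i$ is the pointwise minimum of a finite family of affine functions. Monotone increase follows from $\lambda_t^i > 0$ together with the standard nonnegativity $\lconst \geq 0$ of the constraint vectors, because every supergradient of $U^i$ at any point is a convex combination of the nonnegative vectors $\{\lambda_s^i \lconst\}$ indexed by the currently active pieces.

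The key identity to establish is $U^i(\respi) = u_t^i$. Evaluating the definition at the point $\respi$ gives $U^i(\respi) = \min_{s \in [T]}[u_s^i + \lambda_s^i\, \alpha_s'(\respi - \var_s^i)]$. Applying \eqref{af_ineq} with the dummy indices swapped ($s \leftrightarrow t$) yields $u_t^i \leq u_s^i + \lambda_s^i\,\alpha_s'(\respi - \var_s^i)$ for every $s \in [T]$, so every term in the minimum is bounded below by $u_t^i$; at $s = t$ this bound is attained with equality, pinning the minimum. As a byproduct, the affine piece indexed by $t$ is active at $\respi$, so $\lambda_t^i\lconst$ lies in the superdifferential of $U^i$ at $\respi$, and equivalently $U^i(x) \leq u_t^i + \lambda_t^i\lconst'(x - \respi)$ pointwise in $x$.

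With these ingredients in hand, the rationalization reduces to verifying KKT for the concave program \eqref{thm1:rat}. Writing the Lagrangian $L(\{\gamma^i\},\nu) = \sum_{i=1}^M \sw^i U^i(\gamma^i) - \nu(\lconst' \sum_i \gamma^i - p^*)$, stationarity at $\gamma^i = \respi$ requires $\nu \lconst \in \sw^i\,\partial U^i(\respi)$, which is satisfied by the supergradient inclusion above with $\nu = \sw^i \lambda_t^i$; complementary slackness is arranged by taking $p^* = \lconst' \sum_i \respi$, and primal feasibility is immediate. Concavity of $\sum_i \sw^i U^i$ promotes the KKT conditions into a sufficient optimality certificate. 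I expect the main obstacle to be reconciling the common multiplier $\nu$ across agents, since $\sw^i \lambda_t^i$ must be independent of $i$ for each fixed $t$; this residual degree of freedom is absorbed by rescaling each agent's Afriat numbers via $(u_t^i,\lambda_t^i) \mapsto (c^i u_t^i, c^i \lambda_t^i)$, which leaves \eqref{af_ineq} invariant, and then matching the weights $\sw^i$ so that the per-agent products align.
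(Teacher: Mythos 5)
Your preparatory steps are correct and are exactly the ingredients of the standard Afriat-style argument behind the cited Lemma 1 of \citep{snow2022identifying}: concavity and continuity of $U^i$ as a pointwise minimum of finitely many affine functions, monotonicity from $\lambda_t^i>0$ and $\alpha_t\in\reals^{\dimn}_+$, the identity $U^i(\beta_t^i)=u_t^i$ obtained by swapping the roles of $s$ and $t$ in \eqref{af_ineq}, and the resulting global overestimator $U^i(x)\le u_t^i+\lambda_t^i\alpha_t'(x-\beta_t^i)$. Up to that point nothing is missing.

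The gap is in the final step, precisely where you flag it, and your proposed repair does not close it. Stationarity of the collective program at time $t$ forces a single multiplier $\nu_t$ with $\nu_t\alpha_t\in\mu^i\,\partial U^i(\beta_t^i)$ for every $i$; since generically the only active piece of $U^i$ at $\beta_t^i$ is the $t$-th one, this collapses to $\mu^i\lambda_t^i=\nu_t$ for all $i$ and all $t$. Your rescaling $(u_t^i,\lambda_t^i)\mapsto(c^iu_t^i,c^i\lambda_t^i)$ followed by a choice of $\mu^i$ supplies only one free constant $d^i:=\mu^ic^i$ per agent, so it requires $d^i\lambda_t^i$ to be independent of $i$ for every $t$ simultaneously, i.e.\ the matrix $[\lambda_t^i]_{t,i}$ must have rank one. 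Feasibility of \eqref{af_ineq} imposes no such coupling (the inequalities decouple across $i$), so for $T\ge 2$, $M\ge 2$ and generic feasible Afriat numbers the rescaling cannot succeed. The argument in the source avoids multiplier matching altogether: from $U^i(\gamma^i)\ge U^i(\beta_t^i)=u_t^i$ and the overestimator one gets $\lambda_t^i\alpha_t'(\gamma^i-\beta_t^i)\ge U^i(\gamma^i)-u_t^i\ge 0$, hence $\alpha_t'(\gamma^i-\beta_t^i)\ge 0$ with strict inequality whenever agent $i$ strictly improves; summing over $i$ then violates the aggregate budget constraint. This certifies directly that $\{\beta_t^i\}_{i=1}^M$ is Pareto optimal at each $t$ (equivalently, a weighted-sum maximizer for the weights $\mu_t^i\propto 1/\lambda_t^i$, which may vary with $t$), which is the sense of rationalization used in \eqref{def:coord_eq}. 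If one insists on a single time-invariant weight vector $\mu$ as \eqref{thm1:rat} is literally written, an additional normalization of the $\lambda_t^i$ (or an ordinal, concave re-parametrization of each $U^i$) is needed, and neither is delivered by the per-agent rescaling you describe.
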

\begin{proof}
See Lemma 1 of \citep{snow2022identifying}.
\end{proof}

These results give us a principled and efficient way of performing inverse multi-objective optimization, by testing the feasibility of a linear program. We can first test whether the data is consistent with "collective rationality", i.e., whether the group is behaving "intelligently" by consistently optimizing a set of utility functions, then we can reconstruct individual utility functions which rationalize the dataset. This gives us a mechanism for inferring the underlying distribution of objectives in the group, or for predicting future group behavior. 

In this section we have first presented the general forward and inverse multi-objective optimization problems, then revealed a specific form of multi-objective optimization that can be tested efficiently by solving a particular linear program. Next, we present the setting which we will apply these results: detecting UAV coordination. We first outline the UAV-tracking dynamics and interaction model, then show how this can be mapped to the setting presented in this first Section, allowing for efficient testing of UAV "coordination".

\section{Multi-Objective Optimization in UAV Networks}
\label{sec:UAV}

In this section we consider the specific instantiation of a radar - UAV network tracking scenario. The multi-objective optimization framework presented in the previous section will allow us to precisely define coordination in the UAV network, and efficiently detect such coordination on the radar's end. In this section we
\begin{itemize}
    \item present the radar - UAV network interaction dynamics,
    \item provide the definition of UAV network coordination,
    \item outline several motivational target-tracking frameworks which give rise to the above notion,
\end{itemize}

\begin{figure}
\centering
  \includegraphics[width=0.5\linewidth,scale=0.25]{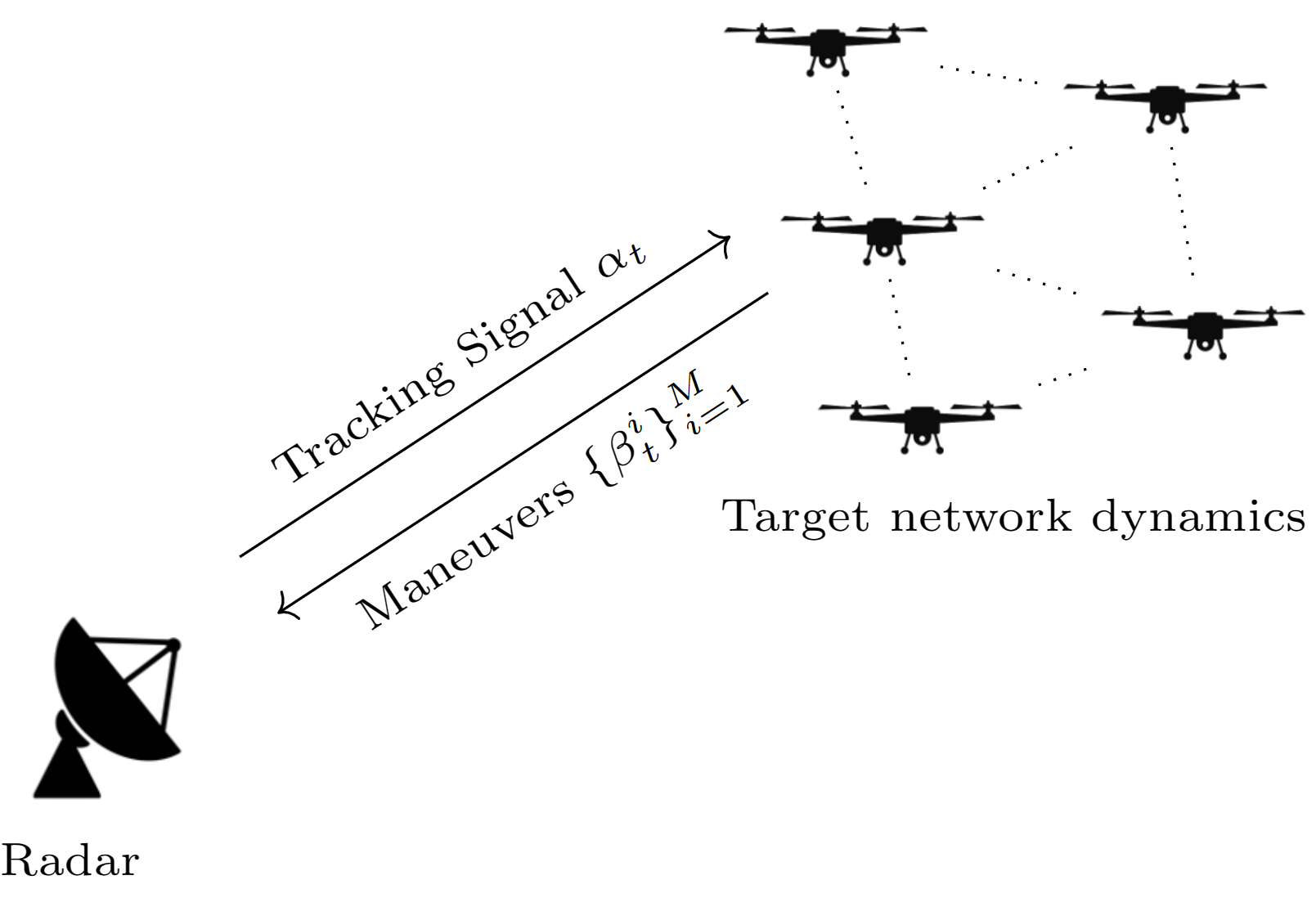}
  \caption{\small UAV Network Interaction. We represent the high-level radar tracking waveform (parameters) by $\alpha_t$, and the target network menauevers by $\{\beta_t^i\}_{i\in[M]}$.}
  \label{fig:interaction}
\end{figure}

\subsection{Interaction Dynamics}
Here we provide the general interaction dynamics between a UAV (target) network and a radar (us). For now let us define, at time $t\in\nat$, the radar's tracking signal as $\alpha_t$ and target $i$'s maneuver as $\beta_t^i$. Figure~\ref{fig:interaction} displays the high-level interaction dynamics: The radar probes the target network, and obtains measurements of the network maneuvers. We will momentarily give explicit motivation for how these variables can be interpreted in a physical-layer multi-target tracking scenario. We consider inverse multi-objective optimization; we aim to detect whether the target network coordinates in a specific sense (corresponding to our previous notion of multi-objective optimization).

At an implementation level, we aim to detect whether the targets jointly adjust their maneuvers such that their overall utility is maximized (in the Pareto-optimal sense), subject to a constraint on their \textit{detectability} by the radar. We will also momentarily provide a definition and motivation for such a notion of detectability.

We consider two time scales for the interaction: the fast time scale $\ftime = 1,2,\dots$ represents the scale at which the target state and measurement dynamics occur, and the slow time scale $\stime = 1,2,\dots$ represents the scale at which the radar probes (tracking signals) and UAV maneuvers $\{\varti\}_{i=1}^{\numagents}$ occur.  

\begin{definition}[Radar - Multi-Target Interaction]
The radar - UAV network interaction has the following dynamics:
\begin{align}
\label{inter_dynam}
    \begin{split}
        \textrm{radar emission}: \lconst &\in \reals^{\dimn}_+ \\
        \textrm{UAV i maneuver} : \varti &\in \reals^{\dimn}_+ \\
        \textrm{UAV i state} : x_{\ftime}^i &\in \reals^{\sdim}, \
        x_{\ftime + 1}^i \sim p_{\varti}(x| x_{\ftime}^i) \\
        \textrm{radar observation}: \measi_{\ftime} &\in \reals^{\mdim}, \
        \measi_{\ftime} \sim p_{\lconst}(y|x_{\ftime}^i) \\
        \textrm{radar tracker}: \pi^i_{\ftime} &= \mathcal{T}(\pi^i_{\ftime-1},\measi_{\ftime}) 
    \end{split}
\end{align}
\end{definition}
where $\pi_k^i$ is radar $i$'s target state posterior and $\mathcal{T}$ is a general Bayesian tracker. For a fixed $\stime$ in the slow time-scale, $\lconst$ abstractly represents the radar's signal output which parameterizes its measurement kernel, and $\varti$ represents target $i$'s maneuver (radial acceleration, etc.) which parameterizes the state update kernel. These interaction dynamics are illustrated in Fig.~\ref{fig:intDyn}. 
Taking the point of view of the radar, we aim to detect if the targets are \textit{coordinating}.

We next present precisely what is meant by coordination, and motivate how the mathematical definition can be derived from practical multi-target filtering algorithms. 

\begin{figure}
\centering
  \includegraphics[width=0.6\linewidth,scale=0.25]{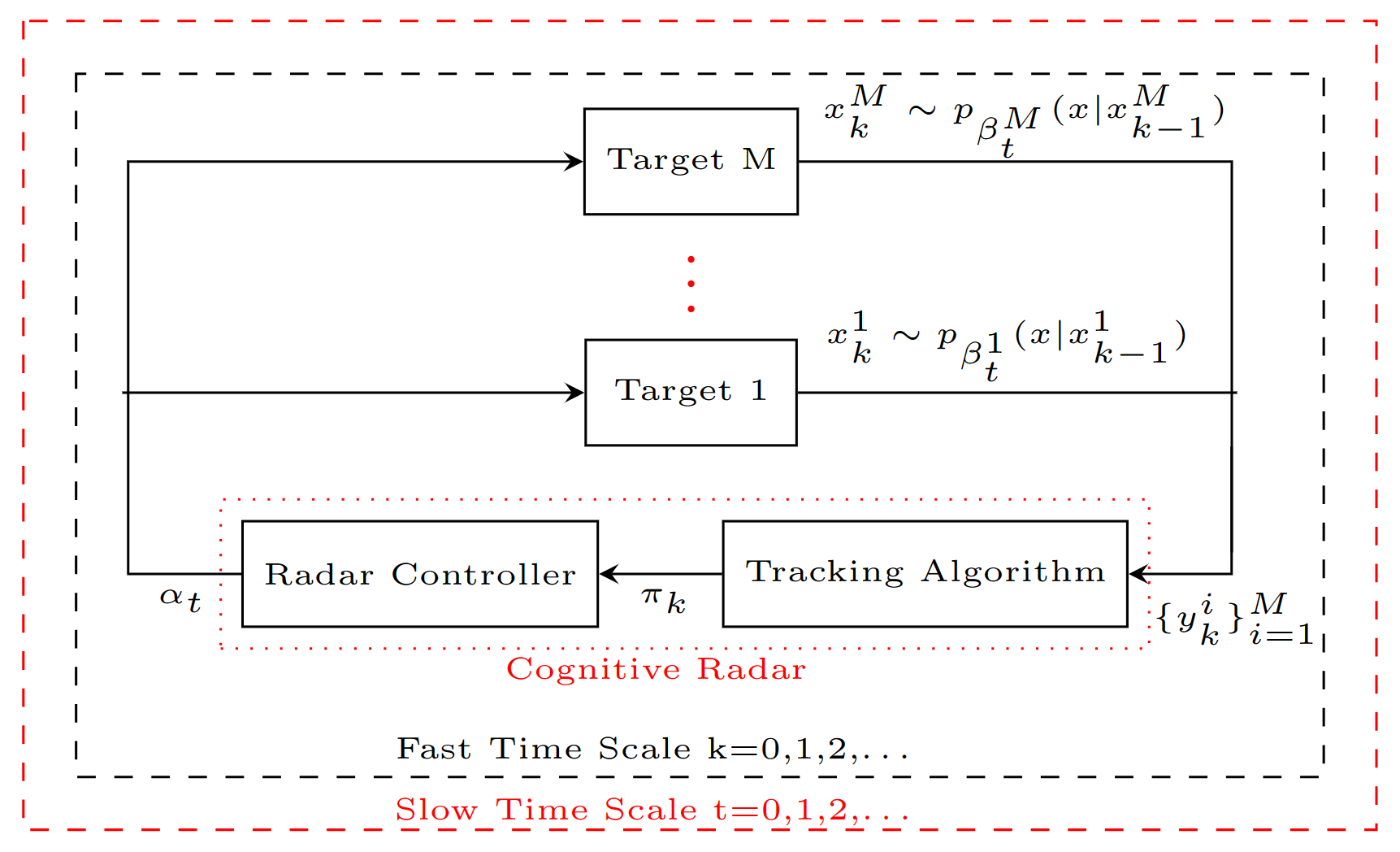}
  \caption{\small UAV Network Interaction Dynamics. The interaction occurs at two time-scales. The slow time-scale, indexed by $t\in\nat$, is the scale at which radar waveform signal parameters $\alpha_t$ and target maneuvers $\{\beta_t^i\}_{i=1}^M$ are adjusted. For a fixed radar tracking waveform and set of target maneuvers, the radar obtains a sequence of target measurements $\{y_k^i\}_{i\in[M}$, indexed on the fast time-scale by $k\in\nat$. From these measurements the radar implements a multi-target filtering algorithm to track the states $\{x_k^i\}_{i\in[M]}$, and thus can recover $\{\beta_t^i\}$. }
  \label{fig:intDyn}
\end{figure}

\subsection{UAV Network Coordination: Constrained Spectral Optimization}
\label{sec:cso}
Here we present a correspondence between the spectral UAV network dynamics and a constrained multi-objective optimization problem, thereby defining what is meant by coordination and showing how it arises from the interaction dynamics \eqref{inter_dynam}.

\paragraph{UAV Network Coordination}
In formulating our problem, it is necessary to define rigorously what we mean by UAV coordination. Examples of such coordination definitions have been proposed and studied in works \cite{snow2022identifying}, \cite{snow2023statistical}, \cite{shi2017power}. We consider the following coordination specification. Each UAV has an individual utility function $f^i$, which maps from its state dynamics $\beta_t^i$, parametrizing the state transition kernel in \eqref{inter_dynam}, to a real-valued utility, i.e., 
\[f^i: \reals^{\dimn} \to \reals\]
Such utility functions can capture the UAVs' flight objectives by quantifying a reward profile for flight dynamics. The UAVs then should act to maximize their individual utility functions at each point in time in order to achieve their flight objective. However, such individual maximization would decouple the UAV dynamics such that they act independently of each other's trajectories. A notion of coordination would need to capture a certain coupling or codependency between these trajectories.

We propose to quantify this coupling through a constraint on the radar's average measurement precision. This captures the idea that the UAVs aim to obtain some flight objective while jointly acting such that the entire network remains hidden to a certain degree from the radar. This induces a coupling between UAV trajectories; the UAVs must adjust their individual sequential state dynamics such that the entire network satisfies a certain undetectability constraint. 

This coordination formulation can be summarized informally as:
\begin{align}
    \begin{split}
    \label{moo_int}
        &\textrm{maximize }(f^1,\dots,f^{\numagents})\textrm{,  such that} \\ &\textrm{average radar measurement precision} \leq \textrm{bound}
    \end{split}
\end{align}

The 'maximize $(f^1,\dots,f^{\numagents})$' can be interpreted in the framework of Pareto optimality, as introduced in the previous section. The radar measurement precision bound can be derived from standard multi-target tracking algorithms, as we show in the following section. 

This leads us to our formal definition of coordination in a UAV network, given as follows:
\begin{definition}[Coordinating UAV Network]
\label{def:coord}
Considering the interaction dynamics \eqref{inter_dynam}, we define a coordinating UAV network to be a network of $\numagents$ UAVs, each with individual concave, continuous and monotone increasing\footnote{This objective function structure is known as 'locally non-satiated' in the micro-economics literature, and is not necessarily restrictive when considering target objectives, see \citep{krishnamurthy2020identifying}.} objective functions $f^i: \reals^{\dimn} \to \reals, i\in[\numagents]$, which produces output signals $\{\varti\}_{i=1}^{\numagents}$ on the slow time-scale in accordance with
\begin{align}
\label{def:coord_eq}
    \begin{split}
        \{\varti\}_{i=1}^{\numagents} \in &\arg\max_{\{\argo^i\}_{i=1}^{\numagents}} \{f^1(\argo^1),\dots,f^{\numagents}(\argo^{\numagents})\} \\
        & s.t. \quad \lconst' (\sum_{i=1}^{\numagents} \argo^i) \leq 1 
    \end{split}
\end{align}
    
\end{definition}

Note that \eqref{def:coord_eq} is a special case of the general multi-objective optimization problem \eqref{eq:MOP}, in which the objective functions do not share a common argument but the arguments are jointly constrained. Thus, a coordinating UAV network controls its joint state dynamics (through e.g., controlling a certain formation) such that they are \textit{Pareto optimal} (Def. \eqref{def:par_opt}) with respect to each objective function, the tracking signal from the radar, and a constraint on the UAV network's detectability. 

It is quite straightforward to interpret the individual utility functions $f^i$ of the targets as encoding flight objectives, but one may well ask how the linear constraint in \eqref{def:coord_eq} corresponds to a bound on the radar's average measurement precision, as suggested in the informal definition \eqref{moo_int}. We next provide an example of multi-target state dynamics and several resultant radar tracking algorithms which naturally give rise to this constraint. The purpose is to shed light on how the abstract constrained multi-objective optimization \eqref{def:coord_eq} can be recovered from practical filtering-level tracking dynamics.

\paragraph{Multi-Target Spectral Dynamics}
Here we specify a concrete example of the abstract dynamics \eqref{inter_dynam}. Linear Gaussian dynamics for a target's kinematics \citep{li2003survey} and linear Gaussian measurements at each radar are widely assumed as a useful approximation \citep{bar2004estimation}. Thus, we will consider the following linear Gaussian state dynamics and measurements over the \textit{fast time scale $\ftime \in \nat$}, with a particular $t\in\nat$ fixed:
\begin{align}
    \begin{split}
    \label{lin_gaus}
        \statei_{\ftime+1}^i &= A^i\statei_{\ftime}^i + \snoisei_{\ftime}^i, \  \statei_0^i \sim \pi^i_0, \\
        \measi_{\ftime} &= C^i\statei_{\ftime}^i + \mnoisei_{\ftime}, \ i\in[\numadv]
    \end{split}
\end{align}
where $\statei_{\ftime}^i, \snoisei_{\ftime}^i \in \reals^{\sdim}$ are the target $i$ state and noise vectors, respectively, and $A^i \in \reals^{\sdim\times \sdim}$ is the state update matrix for target $i$. $\measi_k \in \reals^{\mdim}$ is the radar's measurement of target $i$, $C^i \in \reals^{\mdim \times \sdim}$ is the measurement transformation, and $\mnoisei_k \in \reals^{\mdim}$ is the measurement noise. The constraints and subsequent radar responses will be indexed over the \textit{slow time scale} $\stime \in \nat$. Abstractly, these will parameterize the state and noise covariance matrices:
\begin{equation}
\label{eq:noise}
    \snoisei_{\ftime} \sim \gaussN(0,Q_t(\beta_t^i)), \ \mnoisei_{\ftime} \sim \gaussN(0,R_t(\alpha_t))
\end{equation}

 In this spectral interpretation, $\varti$ represents the vector of eigenvalues of state-noise covariance matrix $Q_t$ and $\lconst$ represents the vector of eigenvalues of the inverse measurement noise covariance matrix $R_t^{-1}$. Thus, given this interpretation we can view modulations of $\alpha_t$ and $\beta_t^i$ as corresponding to increased/decreased measurement precision on the part of the radar. This will be made precise subsequently when we discuss filtering details. First, we briefly illustrate how such noise covariance matrices can be parametrized in the first place. 

 \subparagraph{Waveform Design for Measurement Covariance Modulation}

 To give a precise structure to the radar dynamics, this section provides examples of how the observation noise covariance $R_t(\alpha_t)$ in \eqref{eq:noise} can depend on the radar waveform. Further details on maximum likelihood estimation involving the radar ambiguity function can be found in \citep{van2004detection}, \citep{kershaw1994optimal}. The waveform specifications involve the following terms:
 \begin{itemize}
     \item $c$ denotes the speed of light (in free space),
     \item $\omega_c$ denotes the carrier frequency,
     \item $\theta$ is an adjustable parameter in the waveform,
     \item $\eta$ is the signal to noise ratio at the radar,
     \item $j = \sqrt{-1}$ is the unit imaginary number,
     \item $s(t)$ is the complex envelope of the waveform,
     \item $\alpha$ is the vector of eigenvalues of $R^{-1}$
 \end{itemize}
 We now provide three example waveforms and their resulting observation noise covariance matrices $R(\alpha)$:
 \begin{enumerate}
\item Triangular Pulse - Continuous Wave
\begin{align*}
\begin{split}
    s(t) &= \begin{cases}
        \sqrt{\frac{3}{2\theta}}\left(1 - \frac{|t|}{\theta} \right) \quad &-\theta < t < \theta \\
        0 &\textrm{otherwise} 
    \end{cases} \\
    R(\alpha) &= \begin{bmatrix}
    \frac{c^2\theta^2}{12\eta} & 0 \\
    0 & \frac{5c^2}{2\omega_c^2\theta^2\eta}
    \end{bmatrix}
\end{split}
\end{align*}

\item Gaussian Pulse - Continuous Wave
\begin{align*}
\begin{split}
    s(t) &= \left(\frac{1}{\pi\theta^2} \right)^{1/4}\exp\left(\frac{-t^2}{2\theta^2} \right) \\
    R(\alpha) &= \begin{bmatrix}
            \frac{c^2 \theta^2}{s\eta} & 0 \\
            0 & \frac{c^2}{2\omega_c^2\theta^2\eta}
    \end{bmatrix}
\end{split}
\end{align*}

\item Gaussian Pulse - Linear Frequency Modulation Chirp
\begin{align*}
s(t) &= \left(\frac{1}{\pi \theta_1^2} \right)^{1/4}\exp\left(-\left(\frac{1}{2\theta_1^2}- j\theta_2 \right) t^2\right) \\R(\alpha) &= \begin{bmatrix}
\frac{c^2\theta_1^2}{2\eta} & \frac{-c^2\theta_2\theta_1^2}{\omega_c\eta} \\
\frac{-c^2\theta_2\theta_1^2}{\omega_c \eta} & \frac{c^2}{\omega_c^2\eta}\left(\frac{1}{2\theta_1^2} + 2\theta_2^2\theta_1^2 \right)
\end{bmatrix}
\end{align*}
\end{enumerate}
The key idea is that by adapting the waveform parameters, the radar can modulate the covariance matrix $R(\alpha)$. This modulation can be viewed at a higher level as an adaptation of the eigenvalues of $R(\alpha)$. We treat $\alpha$ as the vector of eigenvalues of $R^{-1}(\alpha)$, so that increasing $\alpha$ increases the measurement precision. Such an increase directly corresponds to, or is enacted by, changes to the physical-layer waveform parametrization, as illustrated above. 

Next, given the above Linear Gaussian specification of the multi-target dynamics \eqref{lin_gaus}, we present two multi-target filtering examples. The goal is to illustrate how the spectral interpretation of $\alpha_t$ and 
 $\beta_t^i$ in \eqref{eq:noise} gives rise within these algorithms to the linear constraint $\alpha_t(\sum_{i=1}^M\beta_t^i) \leq 1$ in \eqref{def:coord_eq}. Recall that this linear constraint should correspond to a physical-layer bound on the radar's average measurement precision.

\subsection{Multi-Target Filtering}
\label{sec:mtf}
The goal of this section is to present several multi-target tracking schemes, a simple de-coupled Kalman filter and a more complex joint probabilistic data association filter (JPDAF), and show how the high-level coordination framework \eqref{eq:linc_rec} can be recovered from each. \textit{These serve as illustrative examples of how to map complex multi-target tracking algorithms to the constrained multi-objective optimization \eqref{def:coord_eq}}. One should be able to extend these mappings to other target tracking schemes.

\paragraph{De-Coupled Kalman Filtering}
A simple interpretation of the multi-target tracking procedure is a standard de-coupled Kalman filter, whereby after measurements are associated to each target, a standard Kalman filter is applied to track each target state separately. This procedure is idealized, but allows for a nice exposition of the connection between filtering precision and the constraint in \eqref{eq:MCR}. 

\subparagraph{Filter Dynamics}
Consider the linear Gaussian dynamics \eqref{lin_gaus}, \eqref{eq:noise}. 
Based on observations $\measi_1,\dots,\measi_{\ftime}$ associated to target $i$, the tracking functionality in the radar computes the target $i$ state posterior
\begin{equation*}
\label{eq:kalpost}
    \pi_{\ftime}^i = \gaussN(\kstate_{\ftime}^i,\kcov_{\ftime}^i)
\end{equation*}
where $\kstate_{\ftime}^i$ is the conditional mean state estimate and $\kcov_{\ftime}^i$ is the covariance, computed by the classical Kalman filter:
\begin{align*}
    \begin{split}
        \kcov_{\ftime + 1 | \ftime}^i &= A^i\kcov_{\ftime}^i(A^i)' + Q_t(\varti) \\
        K_{\ftime + 1}^i &= C^i\kcov_{\ftime+1 | \ftime}^i (C^i)' + R_t(\lconst) \\
        \kstate_{\ftime+1}^i &= A^i\kstate^i + \kcov_{\ftime+1|\ftime}^i(C^i)'(K^i_{\ftime+1|\ftime})^{-1}(\measi_{\ftime+1} - C^iA^i\kstate^i_{\ftime}) \\
        \kcov^i_{\ftime+1} &= \kcov^i_{\ftime+1|\ftime} - \kcov^i_{\ftime+1|\ftime}(C^i)'(K_{\ftime+1}^i )^{-1}C^i\kcov^i_{\ftime+1|\ftime}
    \end{split}
\end{align*}
Under the assumption that the model parameters in \eqref{lin_gaus} satisfy $[A^i,C^i]$ is detectable and $[A^i,\sqrt{\sncov}]$ is stabilizable, the asymptotic predicted covariance $\kcov^i_{\ftime+1 | \ftime}$ as $k \to \infty$ is the unique non-negative definite solution of the \textit{algebraic Riccatti equation} (ARE): 
\begin{align}
    \begin{split}
    \label{eq:ARE}
        &\ARE := \\
        &- \kcov + A^i(\kcov - \kcov (C^i)'[C^i\kcov (C^i)' + R_t(\lconst)]^{-1}C^i\kcov)(A^i)' + Q_t(\varti) = 0
    \end{split}
\end{align}
  Let $\kcov_{\stime}^{*}(\lconst,\varti)$ denote the solution of the ARE and $\kcov_{\stime}^{* -1}(\lconst,\varti)$ be its inverse, representing the asymptotic measurement \textit{precision} obtained by the radar.
  
  \subparagraph{Extracting a Revealed Preference Bound} By Lemma 3 of \citep{krishnamurthy2020identifying}, we can represent a limit $\bar{\kcov}^{-1}$ on the radar's precision of target $i$ measurement, $\kcov_{\stime}^{* -1}(\lconst,\varti)$ as the simple linear inequality $\lconst ' \varti \leq 1$, i.e., 
 \[\lconst'\varti \leq 1 \Longleftrightarrow  \kcov_{\stime}^{* -1}(\lconst,\varti) \leq \bar{\kcov}^{-1}\] 
 where the constant $1$ bound is taken without loss of generality. The key idea behind this equivalence is to show the asymptotic precision $\kcov^{* -1}_n(\cdot,\varti)$ is monotone increasing in the first argument $\lconst$ using the information Kalman filter formulation. Then, we can represent a constraint on the radar's average precision over measurements of all targets as
 \begin{equation}
 \label{eq:linc_rec}
 \lconst ' (\sum_{i=1}^{\numagents} \varti) \leq 1
\end{equation}
 Thus, we recover a direct correspondence between the radar's average measurement precision and the linear inequality constraint in \eqref{eq:MCR}. Thus, again, "collective rationality" \eqref{eq:MCR} on the part of the UAV network can directly be interpreted as the high-level constrained multi-objective optimization \eqref{moo_int}. 

The recovery of this linear constraint \eqref{eq:linc_rec} from the de-coupled Kalman filter gives a clear correspondence between the filtering dynamics and the high-level objective constraint \eqref{moo_int}. However, this de-coupled Kalman filtering scheme is idealized and simplified; next we outline a more sophisticated multi-target tracking algorithm which is widely used in practice \cite{fortmann1980multi}, \cite{rezatofighi2015joint}, and show the same recovery of the linear constraint \eqref{eq:linc_rec}. 

\paragraph{Joint Probabilistic Data Association Filter}

The joint probabilistic data association filter (JPDAF) operates under the regime where $n$ measurements $y_k^j, j \in [n]$ \eqref{lin_gaus} of $m$ targets are obtained, and it is not known which measurements correspond to which target. See \cite{bar1995multitarget} for clarification of any details.

\subparagraph{Filter Dynamics} Define the empirical validation matrix $\Omega = [\omega_{jt}, j\in[n],t\in\{0,\dots,m\}$, with $\omega_{jt} = 1$ if measurement $j$ is in the \textit{validation gate} of target $t$, and $0$ otherwise. It is common to let the $t=0$ index correspond to "none of the targets". 

Now we construct an object $\theta$ known as the "joint association event", as 
\[\theta = \bigcap_{j=1}^m \theta_{jt_j}\]
where 
\begin{itemize}
    \item[-] $\theta_{jt}$ represents the event that measurement $j$ originated from target $t$
    \item[-] $t_j$ is the index of the target which measurement $j$ is associated with in the event under consideration
\end{itemize}
So, $\theta$ can represent any possible set of associations between measurements and targets.

Then, we can form the \textit{event matrix} 
\[\hat{\Omega}(\theta) = [\hat{\omega}_{jt}]\]
where 
\[\hat{\omega}_{jt} = \begin{cases}
    1, \, \theta_{jt} \in \theta \\
    0, \, \textrm{else}
\end{cases}\]
$\hat{\Omega}(\theta)$ is thus the indicator matrix of measurement-target associations in event $\theta$. 

We say an event $\theta$ is a \textit{feasible association event} if
\begin{enumerate}
    \item a measurement is associated to only one source,
    \begin{equation}
        \label{eq:msource}
        \sum_{t=0}^m \hat{\omega}_{jt}(\theta) = 1, \quad \forall \,j\in[n]
    \end{equation}
    \item at most one measurement originates from each target, 
    \begin{equation}
        \label{eq:tdi}
        \delta_t(\theta) := \sum_{j=1}^n \hat{\omega}_{jt}(\theta) \leq 1, \quad \forall \, t\in [m]
    \end{equation}
\end{enumerate}
Denote by $\Theta$ the set of all feasible events.

The binary variable $\delta_t(\theta)$ is known as the "target detection indicator" since it indicates whether, in event $\theta$, a measurement $j$ has been associated to target $t$. We may also define a "measurement association indicator" 
\begin{equation}
\label{eq:mai}
    \tau_j(\theta) := \sum_{t=1}^m \hat{\omega}_{jt}(\theta)
\end{equation}
which indicates if a particular measurement $j$ is associated with a target $t$. Note the difference between \eqref{eq:mai} and \eqref{eq:msource}; the latter sums from $0$ to include the possibility of a measurement being assigned to "no target", i.e., clutter, while the former sums from $1$, indicating if the measurement has been assigned to an actual target. 

Using these definitions we can write the number of false (unassociated) measurements in event $\theta$ as 
\begin{equation}
    \label{eq:fm}
    \phi(\theta) := \sum_{j=1}^n[1-\tau_j(\theta)]
\end{equation}

Using these preliminary concepts, the JPDAF can be formulated by first deriving the posterior probability of joint-association events given the measured data, then incorporating this into a standard filtering scheme akin to the Kalman filter. The filtering can be done in an uncoupled or coupled manner; the former assumes target measurements are independently distributed, and the latter is capable of correlations in target state estimation errors. 

\textit{Uncoupled Filtering}: Now given a particular feasible joint-association event $\theta_k \in \Theta$, and letting $\delta_t, \tau_j, \phi$ be shorthand for \eqref{eq:tdi}, \eqref{eq:mai}, \eqref{eq:fm}, respectively evaluated at $\theta_k$, \citep{bar1995multitarget} derives the posterior probability $P(\theta_k| \{y_k^j\}_{j=1}^n)$, under the uncoupled assumption, as 
\begin{align}
\begin{split}
\label{eq:jointpost}
&P(\theta_k | \{y_k^j\}_{j=1}^n) \propto \frac{\phi!}{m_k!}\mu_{F}(\phi)V^{-\phi}\prod_{j}[f_{tj}(y_k^j)]^{\tau_j} \prod_t (P^t_D)^{\delta_t}(1-P^t_D)^{1-\delta_t}
\end{split}
\end{align}
where $P_D^t$ is the detection probability of target $t$, $m_k = n-\phi$, and 
\[f_{tj}(y_k^j) = \gaussN(y_k^j; \hat{y}^{t_j}_{k|k-1}, S^{t_j}_k)\]
with $ \hat{y}^{t_j}_{k|k-1}$ the predicted measurement for target $t_j$ in the previous iteration of the filter, and $S^{t_j}_k$ the associated innovation covariance matrix. $\mu_{F}(\phi)$ is the probability mass function governing the number of false measurements $\phi$, and such measurements not associated with a target are assumed uniformly distributed in the surveillance region of volume $V$.

Given, this posterior probability the uncoupled filter proceeds by separately filtering each target state independently. For brevity we do not introduce this filtering process, but do so for the more sophisticated and robust \textit{coupled} filter.

\textit{Coupled Filtering}:
Given a particular feasible joint-association event $\theta_k \in \Theta$, and letting $\delta_t, \tau_j, \phi$ be shorthand for \eqref{eq:tdi}, \eqref{eq:mai}, \eqref{eq:fm}, respectively evaluated at $\theta_k$, \citep{bar1995multitarget} derives the posterior probability $P(\theta_k| \{y_k^j\}_{j=1}^n)$  as 
\begin{align}
\begin{split}
\label{eq:jointpost}
P(\theta_k | \{y_k^j\}_{j=1}^n) \propto &\frac{\phi!}{m_k!}\mu_{F}(\phi)V^{-\phi}f_{t_{j_1}, t_{j_2}, \dots}(y_k^j, j:\tau_j=1) \prod_t (P^t_D)^{\delta_t}(1-P^t_D)^{1-\delta_t}
\end{split}
\end{align}
where here $f_{t_{j_1}, t_{j_2}, \dots}$ is the joint pdf of the measurements of the targets under consideration, and $t_{j_i}$ is the target which $y_{k}^{j_i}$ is associated in event $\theta_k$. 
Now we introduce the Joint Probabilistic Data Association Coupled Filter (JPDACF) state estimation and covariance update. 

We form the stacked state vector of predicted states, and associated covariance, as 
\[\hat{x}_{k|k-1} = \begin{bmatrix} \hat{x}^1_{k|k-1} \\ 
\vdots \\
\hat{x}^m_{k|k-1}
\end{bmatrix}\]

\[P_{k|k-1} = \begin{bmatrix}   
P^{1\,1}_{k|k-1}\, \dots \, P^{1\,m}_{k|k-1} \\ \vdots \quad \quad\quad \quad \vdots \\ 
P^{m\,1}_{k|k-1}\, \dots \, P^{m\,m}_{k|k-1}
\end{bmatrix}\]

where $P^{t_1\,t_2}$ is the cross-covariance between targets $t_1$ and $t_2$. The coupled filtering is done as follows:
\[\hat{x}_{k|k} = \hat{x}_{k|k-1} + W_k \sum_{\theta} P(\theta | \{y_k^j\}_{j=1}^n)[\boldsymbol{y}_k(\theta) - \hat{y}_{k|k-1}]\]
where 
\[\boldsymbol{y}_k(\theta) = \begin{bmatrix} 
y_k^{j_1(\theta)}\\ \vdots \\ y_k^{j_m(\theta)}
\end{bmatrix}\]
and $j_i(\theta)$ is the measurement associated with target $i$ in event $\theta$. The filter gain $W_k$ is given by 
\[W_k = P_{k|k-1} \hat{C}_k'\left[\hat{C}_k P_{k|k-1} \hat{C}_k' + \hat{R}_k\right]^{-1}\] 
where
\begin{align*}
    &\hat{C}_k = \textrm{diag}\left[\delta_{1}(\theta) C_k^1, \dots, \delta_m(\theta) C_k^m\right] \\
    &\hat{R}_k = \textrm{diag}\left[R_k^1, \dots,R_k^m\right]
\end{align*}
are the block diagonal measurement and noise covariance matrices. The binary detection indicator variables $\delta_i(\theta)$ accounts for the possibility of a measurement not being associated to target $i$. The predicted stacked measurement vector is 
\[\hat{y}_{k|k-1} = \hat{C}_k\hat{x}_{k|k-1} = \hat{C}_k\hat{A}_{k-1} \hat{x}_{k-1}\]
with $\hat{A}_{k-1} = \textrm{diag}[A_{k-1}^1,\dots,A_{k-1}^m]$ the block diagonal state update matrix. 

The covariance of the updated state is given as
\begin{equation}
\label{eq:CovUpd}
P_{k|k} = P_{k|k-1} + [1-\psi_{0}]W_k \hat{S}_k W_k' + \tilde{P}_k
\end{equation}
where $\hat{S}_k = \hat{C}_k P_{k|k-1} \hat{C}_k' + \hat{R}_k$ is the innovation covariance,
\[\psi_{jt} := \sum_{\theta : \theta_{jt} \in \theta}P(\theta | \{y_k^j\}_{j=1}^n)\]
and $\psi_0 := \sum_{j=1}^m \psi_{j 0}$ is the probability that no measurements arise from targets. $\tilde{P}_k$ is the spread of the innovation terms:
\[\tilde{P}_k := W_k \tilde{S}_k W_k'\]
with
\[ \tilde{S}_k = \begin{bmatrix} \sum_{j=1}^{m_k} \psi_{j1}\left[y_k^1 - \hat{x}^1_{k|k-1}\right]\cdot \left[y_k^1 - \hat{x}^1_{k|k-1}\right]'  -\nu_{1,k} \nu_{1,k} '\\ \vdots \\ \sum_{j=1}^{m_k} \psi_{jm}\left[ y_k^m - \hat{x}^m_{k|k-1}\right]\cdot\left[ y_k^m - \hat{x}^m_{k|k-1}\right]' - \nu_{m,k}\nu_{m,k}'
\end{bmatrix} \] 
and
\[\nu_{i,k} = \sum_{j=1}^{m_k}\psi_{ji}\left[ y_k^i - \hat{x}^i_{k|k-1}\right]\]

\subparagraph{Extracting a Revealed Preference Bound} The crucial observation is that, as in the Kalman filter algebraic Riccati equation \eqref{eq:ARE}, the covariance \eqref{eq:CovUpd} is monotone decreasing in $\beta_t^i$ for all $i$, since this corresponds to increasing $\hat{R}_k$ for fixed $k$. Thus, the asymptotic measurement precision (inverse of asymptotic predicted covariance) is monotone \textit{non-decreasing} in $\beta_t^i$, and by the same reasoning as Lemma 3 of \citep{krishnamurthy2020identifying}, we may derive the equivalence 
\[\alpha_t\left(\sum_{i=1}^M \beta_t^i \right) \leq 1 \Longleftrightarrow \lim_{k\to\infty}P_{k|k}^{-1}(\alpha_t, \{\beta_t^i\})\leq \hat{P}^{-1} \]
Thus, we again have that the constraint $\alpha_t\left(\sum_{i=1}^M \beta_t^i \right) \leq 1$ is a natural representation for a bound on the average measurement precision.

\section{Detection of Coordination}
\label{sec:cdet}
In Section~\ref{sec:mtf} we showed how a notion of coordination, corresponding to linearly constrained multi-objective optimization, arises naturally from several standard multi-target filtering algorithms. In this section, we illustrate how to detect coordination in UAV networks using the microeconomic revealed preference tools in Section~\ref{sec:imoo}. We first consider deterministic detection, which is a straightforward application of the results in Section~\ref{sec:imoo}, then extend this to optimal statistical detection when UAV maneuvers are observed in noise. 
\subsection{Deterministic Coordination Detection}
We take $\varti > 0 \ \forall t \in [T], i \in [\numagents]$, i.e., each UAV always has a non-zero process noise. 
Then by Lemma 1 in \citep{snow2023statistical}, \eqref{def:coord_eq} is equivalent to 
\begin{align}
\begin{split}
\label{eq:moo}
    \{\varti\}_{i=1}^{\numagents} \in &\arg\max_{\{\argo^i\}_{i=1}^{\numagents}} \sum_{i=1}^{\numagents} \sw^i f^i(\argo^i) \ \ s.t. \ \lconst' (\sum_{i=1}^{\numagents}\argo^i ) \leq 1
\end{split}
\end{align}
for any $\sw \in \psimplex$.

Recall that we are interested in the \textit{inverse} multi-objective optimization problem. The equivalence between \eqref{eq:moo} and \eqref{def:coord_eq} allows us to directly utilize the microeconomic result Theorem~\ref{thm:cherchye1}, such that detecting coordination is equivalent to solving the linear program \eqref{af_ineq}. Furthermore, we can reconstruct feasible utility functions which rationalize the dataset as \eqref{eq:Utrec}. 
This procedure for detection of coordination and utility function reconstruction is illustrated in Algorithm~\ref{alg:Cdet}.

\begin{algorithm}
\caption{Detecting Coordination}
\begin{algorithmic}[1]
\State Record the time-indexed dataset of radar waveforms and UAV network responses $\boldsymbol{\beta} = \{\alpha_t,\{ \beta_t^i\}_{i=1}^M, t\in[T]\}$. 
\If {$\exists u_j^i, \lambda_j^i : \, \forall s,t\in[T],i\in[M]: u_s^i - u_t^i - \lambda_t^i\alpha_t'[\beta_s^i-\beta_t^i]\leq 0$}
    \State Declare coordination present
    \State Reconstruct feasible utility functions $U^i$ as 
    $U^i(\cdot) = \min_{t\in[T]}[u_t^i + \lambda_t^i\alpha_t'[\cdot - \beta_t^i]$
    \State $\Rightarrow \exists \mu \in \simplex : \, \{\beta_t^i\}_{i=1}^M \in \arg\max_{\{\gamma^i\}_{i=1}^M}\sum_{i=1}^M\mu_i U^i(\gamma^i) \, : \alpha_t'\left(\sum_{i=1}^M\gamma^i\right) \leq 1$
\EndIf
\end{algorithmic}
\label{alg:Cdet}
\end{algorithm}

\begin{figure}[t]
\label{fig:utilities}
        \centering
        \begin{subfigure}[]{0.3\textwidth} 
            \centering
            \includegraphics[width=\textwidth]{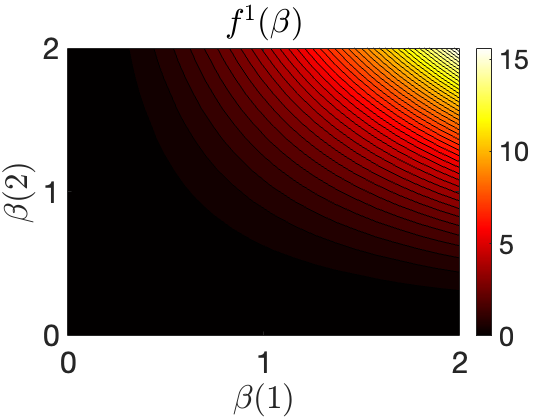}
            \caption[]%
            {{\scriptsize $f^1(\beta) = \textrm{det}(Q(\beta))$}}    
            \label{fig:T_util1}
        \end{subfigure}
        \hfill
        \begin{subfigure}[]{0.3\textwidth}   
            \centering 
            \includegraphics[width=\textwidth]{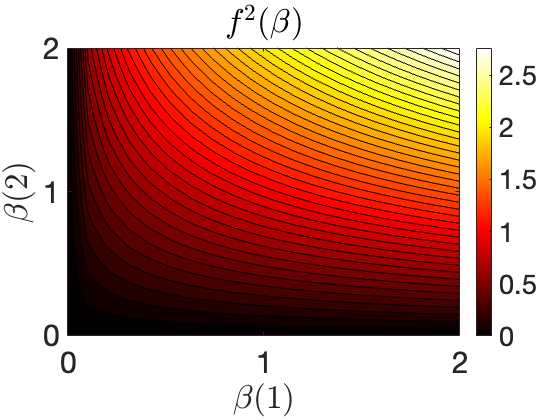}
            \caption[]%
            {{\scriptsize $f^2(\beta) = \textrm{Tr}(Q(\beta))$}}    
            \label{fig:T_util2}
        \end{subfigure}
        \hfill
        \begin{subfigure}[]{0.3\textwidth}   
            \centering 
            \includegraphics[width=\textwidth]{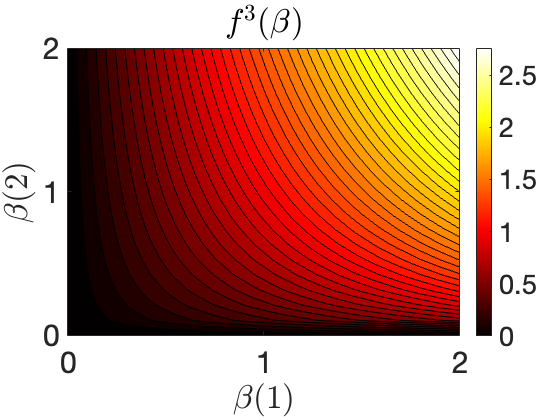}
            \caption[]%
            {{\scriptsize $f^3(\beta) = \sqrt{\beta(1)}\beta(2)$}}    
            \label{fig:T_util3}
        \end{subfigure}
        \hfill
        \vskip\baselineskip
        
        \begin{subfigure}[]{0.3\textwidth}  
            \centering 
            \includegraphics[width=\textwidth]{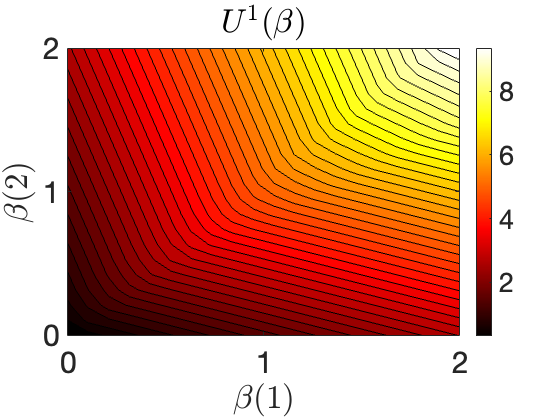}
            \caption[]%
            {{\scriptsize $\min_{t\in[10]}[u_t^1 + \lambda_t^1\alpha_t'[\cdot - \beta_t^1]$}}    
            \label{fig:util1}
        \end{subfigure}
        \hfill
        \begin{subfigure}[]{0.3\textwidth}   
            \centering 
            \includegraphics[width=\textwidth]{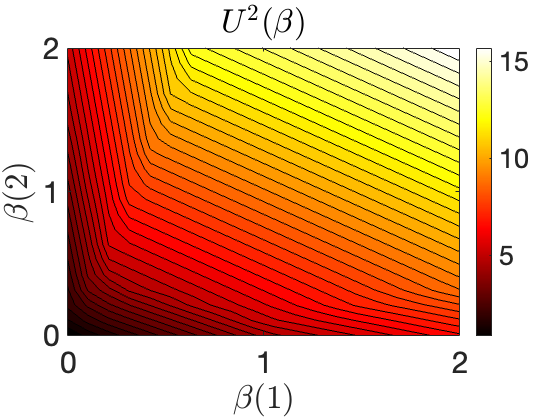}
            \caption[]%
            {{\scriptsize $\min_{t\in[10]}[u_t^2 + \lambda_t^2\alpha_t'[\cdot - \beta_t^2]$}}    
            \label{fig:util2}
        \end{subfigure}
        \hfill
        \begin{subfigure}[]{0.3\textwidth}   
            \centering 
            \includegraphics[width=\textwidth]{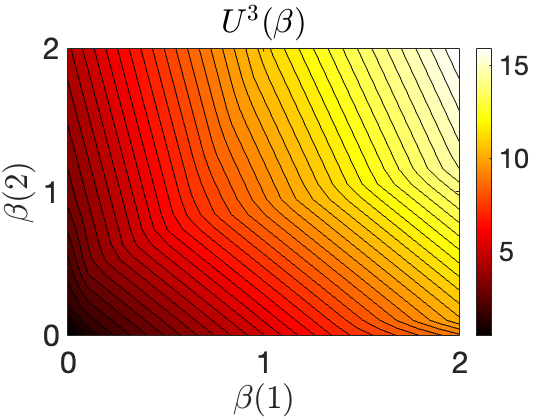}
            \caption[]%
            {{\scriptsize $\min_{t\in[10]}[u_t^3 + \lambda_t^3\alpha_t'[\cdot - \beta_t^3]$}}    
            \label{fig:util3}
        \end{subfigure}
        \caption[]
        {\small $f^i(\beta)$ is the true objective function of the $i$'th radar, inducing the responses $\{\beta_t^i\}_{t=1}^{10}$. $U^i(\beta)$ is the reconstructed objective function for radar $i$, computed using the dataset $\boldsymbol{\beta} = \{\lconst, \{\varti\}_{i=1}^M, t\in[T]\}$ and 
        \eqref{eq:Utrec}.}
        \label{fig:utilities}
\end{figure}

\subparagraph{Numerical Example}
Here we provide a numerical example for the deterministic coordination detection and utility reconstruction procedure outlined in Algorithm~\ref{alg:Cdet}. We consider $\numagents = 3$ targets, and acquire data over $T=10$ time-steps. The radar waveform measurement covariance eigenvalue vector $\alpha_t$, and each target maneuver vector $\beta_t^i$, are taken to be two-dimensional. The three targets are taken to have the following simple utility functions:
\begin{align}
\begin{split}
\label{eq:NumUtils}
    f^1(\beta) &= \textrm{det}(Q(\beta))^2 = \beta(1)^2\beta(2)^2 \\
    f^2(\beta) &= \sqrt{\beta(1)}\beta(2) \\
    f^3(\beta) &= \beta(1)\sqrt{\beta(2)}
\end{split}
\end{align}

We then generate the vectors $\alpha_t,\, \beta_t^i$, with $\mu^1=0.4, \mu^2 = 0.4, \mu^3 =0.3$, as follows:
\begin{itemize}
    \item $\alpha_t \sim U[0.1,1.1]^2$
    \item $\{\beta_t^i\}_{i=1}^M \in \arg\max_{\{\gamma^i\}_{i=1}^M}\sum_{i=1}^3 \mu^i f^i(\gamma^i) \quad \textrm{s.t. } \alpha_t'(\sum_{i=1}^3\gamma^i) \leq 1$
\end{itemize}
Thus, the target responses $\{\beta_t^i\}$ satisfy our notion of coordination (multi-objective optimization). Then, implementing Algorithm~\ref{alg:Cdet}, we confirm that the linear program \eqref{af_ineq} has a feasible solution, indicating the presence of multi-objective optimization, and we may reconstruct feasible utility functions. Reconstructed utility functions are illustrated in Figure~\ref{fig:utilities}. Notice that the reconstructed utility functions match the relative  profiles of the true utility functions, and do so while being concave.

\subsection{Statistical Detection of Coordination}
Recall that thus far we have considered only deterministic UAV $i$ dynamics $\varti$. We now consider the case when these measured responses are corrupted by noise. 

We introduce a statistical detector for determining whether these \textit{noisy} responses are consistent with multi-objective optimization, with theoretical guarantees on Type-I error.


 Let $\barbeta$ denote the dataset when the radar responses are  observed in noise:
\begin{equation}
\label{eq:dataset}
    \barbeta = \{\lconst, \nrespi , t \in [T], i \in [\numagents]\}
\end{equation}
where $\nrespi = \respi + \noisei_t$, and $\noise^i_t$ are independent random variables generated according to distributions $\ndistr^i_t$. 
We propose a statistical detector to optimally determine if the responses are consistent with Pareto optimality \eqref{eq:MOP}. Define \\
$H_0$: null hypothesis that the dataset \eqref{eq:dataset} arises from the optimization problem \eqref{def:coord_eq} for all $t\in[T]$. \\
$H_1$: alternative hypothesis that the dataset \eqref{eq:dataset} does not arise from the optimization problem \eqref{def:coord_eq} for all $t\in[T]$. 

There are two possible sources of error: \\
\textbf{Type-I error}: Reject $H_0$ when $H_0$ is valid.\\
\textbf{Type-II error}: Accept $H_0$ when $H_0$ is invalid. 

We formulate the following test statistic $\optstat$, as a function of $\barbeta$, to be used in the detector: 
\begin{equation}
\label{eq:optstat}
\optstat = \max_i \optstati
\end{equation}
where $\optstati$ is the solution to:
\begin{align}
\begin{split}
\label{eq:LP}
&\min \stati : \exists \ u_t^i > 0, \lambda_t^i > 0 : \, u_s^i - u_t^i - \lambda_t^i \lconst'(\bar{\var}^i_s - \bar{\var}_t^i) - \lambda_t^i \stati \leq 0 
\end{split}
\end{align}
Form the random variable $\rvtest$ as 
\begin{align}
\begin{split}
\label{eq:psimax}
&\rvtest =\max_{i, \,t\neq s}[\lconst'(\noisei_t - \noisei_s)]
\end{split}
\end{align}
Then we propose the following statistical detector (with $\gamma \in (0,1)$):
\begin{equation}
\label{eq:stat_test}
\int_{\optstat}^{\infty} f_{\rvtest}(\psi)d\psi 
\begin{cases}
\geq \gamma \Rightarrow H_0 \\ < \gamma \Rightarrow H_1
\end{cases}
\end{equation}
where $f_{\rvtest}(\cdot)$ is the probability density function of $\rvtest$. Let $\fcdf$ be the cdf of $\rvtest$ and $\fccdf$ be the complementary cdf of $\rvtest$. Then we have the following guarantees:

\begin{theorem}
\label{thm:stat_det}
Consider the noisy dataset $\eqref{eq:dataset}$, and suppose \eqref{eq:LP} has a feasible solution. Then 
\begin{enumerate}
    \item The following null hypothesis implication holds:
    \begin{equation}
    \label{eq:H0equiv} H_0 \subseteq \bigcap_{i \in [M]} \{\optstati \leq \rvtesti\} \end{equation}
    \item The probability of Type-I error (false alarm) is 
    \[ \Popt(H_1 | H_0) = \Prob(\fccdf(\optstat) \leq \gamma \ | H_0) \leq \gamma \]

    \item The optimizer $\optstat$ yields the smallest Type-I error bound:
    \begin{align*}
    \begin{split}
    &\Prob_{\bar{\boldsymbol{\Phi}}(\barbeta)}(H_1 | H_0) \geq \Popt(H_1 | H_0) \quad \forall \bar{\boldsymbol{\Phi}}(\barbeta) \in [\optstat, \rvtest]
    \end{split}
    \end{align*}
    
\end{enumerate}
\end{theorem}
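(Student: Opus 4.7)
The plan is to reduce all three claims to a single structural fact: under $H_0$, the Afriat certificate for the noiseless data extends to a feasible point of the noisy LP \eqref{eq:LP}, yielding $\optstati \leq \rvtesti$. The remaining items then follow from monotonicity of $\fccdf$ together with the probability integral transform (PIT). I will handle the three parts in order.

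For Part 1, under $H_0$ Theorem~\ref{thm:cherchye1} supplies scalars $u^i_t \in \reals$ and $\lambda^i_t > 0$ satisfying \eqref{af_ineq} for the true (noiseless) responses $\{\respi\}$. Writing $\respi = \nrespi - \noisei_t$ and substituting into \eqref{af_ineq} rearranges to
\[u^i_s - u^i_t - \lambda^i_t\lconst'(\nrespi_s - \nrespi_t) \leq \lambda^i_t\lconst'(\noisei_t - \noisei_s) \leq \lambda^i_t\rvtesti\]
by the definition of $\rvtesti$. After shifting all $u^i_t$ by a common positive constant to meet the $u^i_t > 0$ requirement in \eqref{eq:LP} (which leaves the inequality invariant since only differences $u^i_s - u^i_t$ appear), the triple $(u^i_t, \lambda^i_t, \stati = \rvtesti)$ is a feasible point of \eqref{eq:LP}. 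Since $\optstati$ is the infimum of feasible $\stati$, we get $\optstati \leq \rvtesti$ for every $i$, which is exactly \eqref{eq:H0equiv}.

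For Part 2, since $\fccdf$ is non-increasing, Part 1 gives $\fccdf(\optstat) \geq \fccdf(\rvtest)$ almost surely on $H_0$, hence $\{\fccdf(\optstat) \leq \gamma\} \subseteq \{\fccdf(\rvtest) \leq \gamma\}$, which yields
\[\Popt(H_1 | H_0) = \Prob(\fccdf(\optstat) \leq \gamma \mid H_0) \leq \Prob(\fccdf(\rvtest) \leq \gamma \mid H_0) \leq \gamma,\]
where the last inequality is the PIT applied to $\rvtest$. For Part 3, for any candidate statistic $\bar{\boldsymbol{\Phi}}(\barbeta) \in [\optstat, \rvtest]$ the same monotonicity of $\fccdf$ yields $\fccdf(\bar{\boldsymbol{\Phi}}) \leq \fccdf(\optstat)$, so $\{\fccdf(\bar{\boldsymbol{\Phi}}) \leq \gamma\} \supseteq \{\fccdf(\optstat) \leq \gamma\}$, and this inclusion translates to $\Prob_{\bar{\boldsymbol{\Phi}}}(H_1 | H_0) \geq \Popt(H_1 | H_0)$, the claimed optimality.

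The main obstacle I expect is the PIT step in Part 2: the bound $\Prob(\fccdf(\rvtest) \leq \gamma) \leq \gamma$ is clean only when $\fcdf$ is continuous, which I would secure by assuming the noise distributions $\ndistr^i_t$ admit densities, so that $\rvtest$ (a maximum of finitely many linear functionals of continuous random vectors) inherits a continuous CDF; without this one only retains the weaker stochastic dominance by $U[0,1]$, but the Type-I inequality $\leq \gamma$ still goes through. A secondary subtlety is measurability of $\optstat$ as a function of $\barbeta$, which follows from continuity of the parametric LP value function in the data.
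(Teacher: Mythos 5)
Your proof is correct and follows essentially the same route as the paper's source: the paper defers the proof of Theorem~\ref{thm:stat_det} to \citep{snow2023statistical}, and the argument there is exactly your reduction — substitute $\respi = \nrespi - \noisei_t$ into the Afriat certificate guaranteed under $H_0$ to exhibit $(\,u^i_t,\lambda^i_t,\stati=\rvtesti)$ as feasible for \eqref{eq:LP}, conclude $\optstati\leq\rvtesti$, and then obtain parts 2 and 3 from monotonicity of $\fccdf$ plus the probability integral transform. Your side remarks (shifting the $u^i_t$ to meet positivity, and continuity of $F_\rvtest$ when the noise has a density so the PIT bound is clean) are exactly the right points to flag.
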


\begin{proof}
    See \citep{snow2023statistical}
\end{proof}

The motivation for this detector is that it allows one to quantify a strict upper bound on the probability of Type-I error; the specific choice of threshold $\gamma$ is left to the designer, and may vary depending on application criteria. 

\begin{algorithm}[h!]
\caption{Detecting Multi-Objective Optimization}
\begin{algorithmic}[1]
    \For{l=1:L}
        \For{i=1:M}
             \State simulate $\boldsymbol{\noise}^i_l = [\noise^i_1,\dots,\noise^i_N]^{(l)}, \quad \noise^i_t \sim \Lambda_t^i$
        \EndFor
        \State Compute $\Psi^{l} := \max_i \{\max_{t\neq s}[\lconst(\noise_t^i - \noise_s^i)]\}$
    \EndFor
    \State Compute $\hat{F}_{\Psi}(\cdot)$ from $\{\Psi^l\}_{l=1}^L$
\State Record radar network response $\barbeta$ to the probe $\lconst$
\State Solve \eqref{eq:optstat} for $\optstat$
\State Save $\mathcal{P} := \{\hat{u}_t^i, \hat{\lambda}_t^i, t \in [T], i \in [\numagents]\}$ such that
\begin{align*}
    \hat{u}_s^i - \hat{u}_t^i - \hat{\lambda}_t^i \lconst'(\bar{\var}^i_s - \bar{\var}_t^i) - \hat{\lambda}_t^i \optstati \leq 0 \ \forall i \in [\numagents]
\end{align*}
\State Implement detector \eqref{eq:stat_test} as
\begin{equation}
    1  - \hat{F}_{\Psi}(\optstat) \begin{cases}
         > \gamma \Rightarrow H_0 \\
         \leq \gamma \Rightarrow H_1
    \end{cases}
\end{equation}
\end{algorithmic}
\label{alg:MOOdet}
\end{algorithm}

In practice one would likely not have access to the true density function  $f_{\Psi}(\cdot)$. However, it is typical to assume some structure on the additive noise process $\{\Lambda_t^i, t\in[T]\}_{i\in[\numagents]}$ such as Gaussianity. Thus, under such an assumption one can compute an approximation $\hat{F}_{\Psi}(\cdot)$ of the cumulative distribution function $F_{\Psi}(\cdot)$, then implement the statistical detector using this. Algorithm~\ref{alg:MOOdet} provides such an implementation of the statistical detector \eqref{eq:stat_test}.

\begin{figure}
\centering
  \includegraphics[width=0.6\linewidth,scale=1]{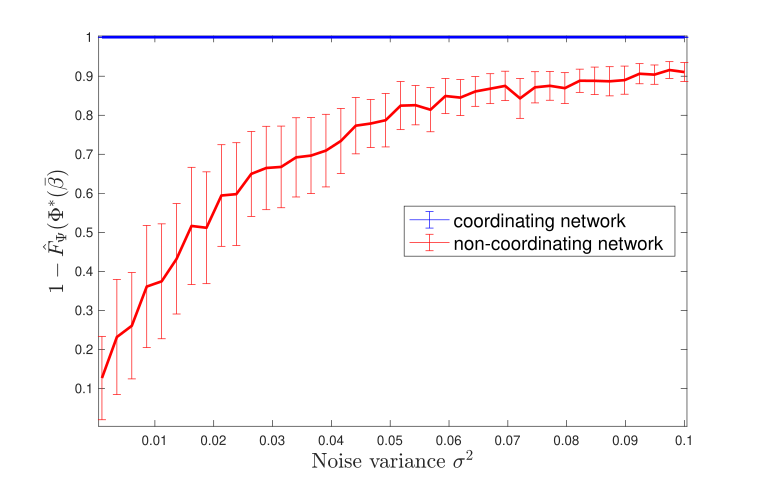}
  \caption{ Statistic $1  - \hat{F}_{\Psi}(\optstat)$ as a function of variance of the noise distribution $\Lambda_t$. Error bars represent one standard deviation within the dataset produced by 300 Monte-Carlo simulations. Higher $1  - \hat{F}_{\Psi}(\optstat)$ corresponds to higher likelihood of radar network coordination in the statistical detector \eqref{eq:stat_test}.}
  \label{fig:StatDet}
\end{figure}

\subparagraph{Numerical Example}
Here we investigate the empirical behavior of the statistic $1  - \hat{F}_{\Psi}(\optstat)$ under both $H_0$ and $H_1$. We generate the statistic from the procedure outlined in Algorithm 1, with $L = 500$, $M = 3$, $T=10$. The probe signal $\lconst \in \reals^2$ is generated randomly as $\lconst \sim U[0.1,1.1]^2 $, i.e. each element of $\lconst$ is generated as an independent uniform random variable on the interval [0.1,1.1]. To simulate a UAV network, the responses $\{\varti\}_{i=1}^{\numagents}$ are taken as solutions to the multi-objective optimization \eqref{def:coord} with objective functions given by \eqref{eq:NumUtils}, and $\sw^1=\sw^2=\sw^3=1/3$. Then noisy responses $\{\bar{\var}_t^i\}_{i=1}^{\numagents}$ are obtained by adding i.i.d. Gaussian noise $\noisei_t \sim \Lambda_t = \gaussN(0,\sigma^2)$. The blue line in Figure~\ref{fig:StatDet} displays the resultant empirical statistic $1  - \hat{F}_{\Psi}(\optstat)$ as a function of noise variance. 
To simulate a non-coordinating radar network, we generate each response $\varti \sim U[0,1]^2$ independently, and similarly add Gaussian measurement noise $\noisei_t \sim \Lambda_t = \gaussN(0,\sigma^2)$. The red line in Figure~\ref{fig:StatDet} is the empirical statistic $1  - \hat{F}_{\Psi}(\optstat)$ under these circumstances, when no coordination is present.

Let us interpret the simulation results displayed in Figure~\ref{fig:StatDet}. 
Observe that the statistic $1  - \hat{F}_{\Psi}(\optstat)$ is a constant value of 1 for the noise variance range simulated. This validates our choice that the null hypothesis $H_0$ (coordination) should be chosen once the statistic surpasses a threshold. Furthermore, it indicates the strength of the statistical detector's ability to filter noise and correctly determine that coordination is present. However, as the noise variance increases the probability of type-II error (determining $H_0$ under $H_1$) grows, since the statistic $1  - \hat{F}_{\Psi}(\optstat)$ becomes more likely to surpass a given threshold $\gamma \in (0,1)$. This is an unavoidable consequence, within any statistical detection scheme, of the degraded ability to differentiate coordination vs non-coordination as the noise power grows. However, the particular behavior displayed in Figure~\ref{fig:StatDet} gives insight into the control of type-II error, since one may choose the threshold $\gamma$ to be arbitrarily close to one, in this small noise regime, such that the probability of type-I error remains constant but that of type-II error is diminished. 

This section presented techniques for both deterministic and statistical detection of coordination in UAV networks. These techniques exploit the microeconomic revealed preference results in Section~\ref{sec:imoo} and the abstract correspondence between UAV dynamics and linearly constrained multi-objective optimization in Sections~\ref{sec:cso}, \ref{sec:mtf}.

\section{Conclusion}

We have investigated the mathematical properties of multi-objective optimization and inverse multi-objective optimization, and presented a microeconomic technique for performing the latter. We have demonstrated how this can be applied in a UAV network coordination detection scheme, by utilizing radar tracking signals. This methodology is more abstract than traditional electronic warfare procedures, and thus allows for a concise encapsulation of the above stated problem and algorithmic solution. We also show how this abstract formulation can be recovered by several specific multi-target filtering algorithms and by specifications of radar waveform design. However, the application of the presented methodology is not limited to these cases, and can find use in a variety of inverse multi-objective optimization settings.

\bibliographystyle{plain}
\bibliography{Bibliography.bib}

\end{document}